\documentclass[letterpaper]{article}
\usepackage{aaai2026}
\usepackage[T1]{fontenc}
\usepackage{times}
\usepackage{helvet}
\usepackage{courier}
\usepackage[hyphens]{url}
\usepackage{graphicx}
\usepackage{natbib}
\usepackage{caption}
\usepackage{tikz}
\usepackage{array}
\usepackage{booktabs}
\usepackage{amsfonts}
\usepackage{multirow}
\usepackage{algorithm}
\usepackage{algorithmic}
\usepackage{booktabs}
 \newtheorem{definition}{Definition}
  \newtheorem{remark}{Remark}
\newtheorem{proof}{Proof}
\newtheorem{proof_sketch}{Proof Sketch}
\newtheorem{lemma}{Lemma}

\usepackage{newfloat}
\usepackage{listings}
\usepackage{amsmath}
\DeclareCaptionStyle{ruled}{labelfont=normalfont,labelsep=colon,strut=off}
\floatstyle{ruled}
\newfloat{listing}{tb}{lst}{}
\floatname{listing}{Listing}
\title{Level-k Distinguishable Mechanisms for Evaluating Bounded Rationality in LLMs}
\author{
    Binchi Zhang\textsuperscript{\rm 1},
    Atrisha Sarkar\textsuperscript{\rm 2}
}
\affiliations{
    \textsuperscript{\rm 1}University of Western Ontario\\
    \textsuperscript{\rm 2}University of Western Ontario\\
    bzhan484@uwo.ca, atrisha.sarkar@uwo.ca
}

\begin{document}
\maketitle

\begin{abstract}
Strategic depth of reasoning is essential for human interaction of Large Language Models (LLMs) operating in boundedly rational environments. However, existing evaluations are primarily based on canonical games prevalent in pretraining corpora, making it difficult to disentangle true strategic reasoning from memorisation. To address this, we formalise a necessary \emph{level-$K$ distinguishability} condition for strategic depth inference and construct a suite of novel game structures that meet this standard.  
Using these games, we evaluate strategic depth in LLMs from both the Chain-of-Thought tokens and actual actions under recursive reasoning and an inductive trace of opponent game-play data. Across experimental trials spanning four LLMs, four game structures, and ten levels of iterated reasoning, we find that models maintain accurate strategic depth under recursive reasoning, with strong internal consistency between stated reasoning and actions at every level. Errors arise from using the wrong number of iterated depth of reasoning steps, not from computing best responses incorrectly. However, inductive inference from opponent play degrades accuracy sharply and unevenly across games, and explicit strategic mentalizing in the chain of thought substantially improves overall performance.

\end{abstract}
\section{Introduction}
In the realm of strategic reasoning, behavioural game theory models agents' deviations from the Nash equilibrium of a game \cite{camerer2004behavioural}. The level-$k$ family of models (including Cognitive Hierarchy models) provides a flexible meta-model to characterize such deviations from equilibrium by identifying the depth of iterated reasoning an agent employs to model and rationalize the behaviour of their interacting partners \cite{wright2017predicting}. Studies on humans have identified a natural limit to human capacity along this dimension \cite{coricelli2009neural}, with heterogeneity arising from game structure and incentives \cite{georganas2015persistence}. Naturally, when LLMs are deployed for everyday tasks that involve reasoning about the behaviour and strategies of interacting agents (some of whom may be human), it is important to know whether an analogous natural limit to this depth exists for Large Language Models, for several reasons. First, LLMs are increasingly used to simulate human behaviour. In domains where it is important to model other humans as strategic agents, such as economic situations \cite{mallard2012modelling} and autonomous vehicle interactions \cite{sarkar2022generalized}, LLMs are being used as a strategic decision layer \citep{horton2023large, fu2024drive}, and as such, they need the capacity to infer strategic depth and respond just as humans do in such interactions. Second, identifying the strategic depth of reasoning is crucial for AI safety: an agent with a lower level of strategic sophistication can be exploited by one with a higher level \cite{alon2026}. In interactions with humans, who have a limit to their own depth, it is important to identify and calibrate the strategic depth of LLMs to prevent such exploitation.

Existing methods for inferring the strategic depth of reasoning in LLMs from a behavioural perspective, that is, through observed actions, use standard canonical games, including coordination and signalling games \cite{jia2026llm}, the Keynesian beauty contest, and negotiation games \cite{zhang2025k, trencsenyi2025approximating}, among others. However, the validity of these instruments is open to question. First, these games were originally designed as instruments to infer human, not artificial, strategic depth of reasoning, and are therefore designed to be simple enough for human comprehension in an experimental setting. Results from such gameplay are widely reported across decades of literature on behavioural game theory, which we can reasonably assume is present in internet-scale training corpora. Second, many of the game structures used for inferring strategic depth lack the capacity to infer reasoning up to an arbitrary depth of iteration. This is because inferring latent attributes---such as utilities, beliefs, and reasoning---from action traces alone runs into the well-known unidentifiability problem: multiple latent states can produce the same behaviour \cite{ng2000algorithms}. In many canonical games, a given level of strategic reasoning does not correspond to a unique observable action, and therefore suffers from the same unidentifiability issue. To improve the validity of such game-theoretic instruments, there is thus a need to design novel game structures (or mechanisms) that satisfy a one-to-one correspondence between actions and depth of iterated reasoning; we refer to this as the \emph{level-$k$ distinguishability} condition.

In a multiagent environment, designing games that elicit a latent and private attribute falls within the domain of mechanism design \cite{duetting2024mechanism, borgers2015introduction}. However, unlike mechanism design, where the goal is to design mechanisms whose solutions are compatible with the private utilities of the players and, therefore, uniquely identifiable, our goal is to design games whose solutions are compatible with and unique to a particular strategic depth of a player. This difference between eliciting latent utilities and eliciting latent reasoning means that we cannot rely on the convenience of the revelation principle \cite{kephart2016revelation} to narrow the class of games (such as auctions) used as identifiability instruments.\\
In this paper, we resolve the above set of identified problems through the following contributions.
\begin{itemize}
  \item We construct a novel game structure and adapt three canonical games to satisfy the \emph{level-$K$ distinguishability} property, using them as instruments to infer the strategic depth of reasoning in LLMs. We use these games to infer the strategic depth of four LLMs up to a previously untested depth of 10.
  \item We evaluate the consistency between the models' iterated depth of reasoning as reflected in their intermediate chain-of-thought inference and in their resulting strategies.
  \item For modelling other players' behaviour, we compare the effect of recursive versus inductive reasoning on the models' level of strategic depth.
\end{itemize}

\section{Related Work}
\label{sec:related-work}

\paragraph{Depth of reasoning and its identification.}
Level-$k$ and cognitive-hierarchy models assign each player an integer depth of iterated best response anchored at a non-strategic $L_0$ type \citep{nagel1995unraveling,costa2001cognition,camerer2004cognitive}, and fitting these types to observed actions is standard practice \citep{crawford2013structural}. Machine learning methods have been used to estimate the strategic depth as a parameter from human game-play data \cite{hartford2016deep}. In contrast to that estimation problem, we focus on the \emph{identification} problem by designing games for automated agents that distinguish each depth of reasoning. A work on the identification problem of higher order rationality is \citet{kneeland2015identifying}, which uses ring-network games for the purpose. However, unlike our focus on the level-$k$ solution concept, they focus on identification of strategies under different orders of rationalizability. That work and the game structures that follow it, such as \citet{cerigioni2019higher}, also target the depths human subjects reach, at most about four, and none states a condition for distinguishability beyond that range. We also improve upon the identifiability aspect through recursive and inductive conditioning of the information about other agents', so exactly one iterated depth of reasoning is correct and the action is identifiable against it alone.

\paragraph{Level-$k$ evaluation of LLMs.}
Many benchmarks now evaluate LLMs by having them play games \citep{duan2024gtbench,wang2024tmgbench,huang2025gama}. They report payoffs, or the share of actions matching a solution concept. However, such games cannot identify the strategic depth, because a model can reach the right action without iterative reasoning. The games themselves are also often structurally simple enough, thereby, limiting what the strategies can reveal. \citet{wang2024tmgbench} scores whether the model identifies the Nash equilibria of $2\times2$ games, which are either solved within two rounds of iterated dominance or not dominance-solvable at all. \citet{huang2025gama} includes a standard level-$k$ instrument, Guess 2/3 of the Average, but scores it by distance from equilibrium, not the strategic depth itself. Other studies apply the level-$k$ model directly. \citet{zhang2025k} prompts models to reason recursively in guessing, auction, and negotiation games. The negotiation game it adapts collapses depth 2 onto depth 0. \citet{liu2025chbench} fits level-$k$ and Poisson cognitive-hierarchy models to LLM choices, capping the maximum type at $\hat{k}=4$ where fit stops improving. \citet{fan2024rational} estimates a depth parameter under a truncated quantal-response equilibrium. Replications of the guessing games of \citet{costa2006cognition} place reasoning-tuned models near three depths, compared with fewer than two for humans \citep{kader2024emergence}. As discussed in the introduction, the above body of work does not verify that the game structure maps each depth to a distinct action over the range it probes. 

\paragraph{Chain-of-thought as evidence about reasoning.}
A second body of work asks whether stated reasoning consistently describes the computation that produced the answer. Models often act on input features their explanations never mention, or explain answers they had already settled on \citep{turpin2023language,chen2025reasoning}. It is suggested that CoT is not an account of the computation at all \citep{barez2025chain,kambhampati2026position}. In contrast to work that focus on the stronger notion of causal connection between the CoT traces and action, e.g. \citep{lanham2023measuring, kambhampati2026position}, we only measure consistency: whether the submitted action best-responds to the stated belief, and whether the two implied depths agree.

\paragraph{Contamination.}
Replacing named games with procedurally sampled payoff matrices dramatically reduces performance, indicating that much apparent strategic competence is recall rather than computation \citep{nie2026equilibrium}. A similar effect has been observed by perturbing the payoffs; models continue to play the canonical strategy after the equilibrium change \citep{georgousis2026counterfactual}. In comparison to such modifications, our approach keeps both the level-k distinguishability condition and transformation of the game structures, which is relatively non-trivial compared to random changes to the game payoffs.

\section{Methodology}
\label{sec:methodology}

The level-$k$ family of models---including its cognitive-hierarchy generalization---characterizes boundedly rational play by assigning each player a discrete cognitive type $k \in \{0, 1, 2, \dots\}$ that indexes the depth of iterated reasoning about others \citep{stahl1994experimental, stahl1995players, nagel1995unraveling, costa2001cognition}. A level-0 ($L_0$) type is \emph{non-strategic}: it forms no beliefs about other players, and its action $a^{(0)} \in S$ is specified exogenously through a non-strategic level-0 \emph{anchor} of the model. Every level-$k$ type with $k \geq 1$ believes its opponents are level-$(k-1)$ and best-responds to that belief,
\begin{equation}
    a^{(k)} \;=\; \operatorname{BR}\!\big(a^{(k-1)}\big)
    \;\equiv\; \arg\max_{s \in S}\, u_i\!\big(s,\, a^{(k-1)}\big),
    \label{eq:br-iteration}
\end{equation}
so that the latent depth $k$ is mapped to observable behavior through iterated best response. Cognitive-hierarchy models relax the degenerate belief in Eq.~\eqref{eq:br-iteration} to a distribution over all lower types \citep{camerer2004cognitive}, but retain the same structural core of a non-strategic anchor, an integer-valued latent depth, and a deterministic level-to-action map \citep{crawford2013structural}. We call the induced sequence $\{a^{(k)}\}_{k=0}^{K_{\max}}$ the \emph{iterated depth response sequence} (IDR sequence) of a game, i.e, the best-response action sequence generated by increasing the depth of iterated reasoning one level at a time.

Inferring the latent depth $k$ from an observed action is identifiable only if the level-to-action map is invertible at least up to the strategic depth we want the game instrument to reveal. Many canonical games fail this condition. For example, a game structure used in \citet{zhang2025k} to infer strategic depth is an adaptation of an item-division negotiation game. In the game, two agents should agree on how to split a pool of items of $m$ types under imperfect information. Abstracting the alternating-offer protocol in its one-shot demand form, let $q \in \mathbb{Z}_{\geq 0}^{m}$ be the pool, $w_i \in \mathbb{Z}_{\geq 0}^{m}$ agent $i$'s private utility vector, and a proposal a bundle $s_i \in B(q) \equiv \{s \in \mathbb{Z}_{\geq 0}^{m} : s \leq q\}$ claimed for oneself, with payoffs
\[
    u_i(s_i, s_{-i}) \;=\; \big(w_i^{\top} s_i\big)\, \mathbf{1}\big[s_i + s_{-i} \leq q\big].
\]
Whenever $w_i$ is strictly positive, the best response to any proposal is to claim the entire remaining $\operatorname{BR}_i(s_{-i}) = q - s_{-i}$. A level-0 agent is non-strategic, and let's say, they propose a bundle $a^{(0)} = s$. A level-1 agent best-responds with $a^{(1)} = q - s$. A level-2 agent best-responds to that with $a^{(2)} = q - (q - s) = s = a^{(0)}$. In this simple illustrative case, the level-2 proposal coincides with the level-0 proposal, and from behavior alone, a depth-0 and a depth-2 agent are therefore indistinguishable for every anchor and every utility profile.

\begin{definition}[Level-$K$ distinguishability]
\label{def:distinguishability}
Let $G$ be a game with IDR sequence $\{a^{(k)}\}_{k=0}^{K_{\max}}$, where $a^{(0)}$ is the $L_0$ anchor and $a^{(k)} = \operatorname{BR}(a^{(k-1)})$. $G$ satisfies \emph{distinguishability within depth $K_{\max}$} iff
\[
    \forall\; 0 \leq k_1 < k_2 \leq K_{\max}: \quad a^{(k_1)} \neq a^{(k_2)},
\]
i.e., $k \mapsto a^{(k)}$ is injective on $\{0, \dots, K_{\max}\}$.
\end{definition}

On a bounded integer strategy space, the iteration in Eq.~\eqref{eq:br-iteration} must eventually revisit a value, resulting either in a fixed point or a periodic cycle. Therefore, we can only infer the strategic depth of an agent through a level-k model up to depth $K_{\max}$. 

\subsection{Level-k distinguishable game structures}

\subsubsection{Ring 11--20}We first construct a novel game structure whose iterated depth response sequence is \emph{level-k distinguishable} by adapting the 11--20 money-request game \citep{arad2012eleven} into a five-player ring inspired by the ring games of \citet{kneeland2015identifying}. Players are arranged on a directed cycle with indices taken modulo 5 (Figure~\ref{fig:ring}); player $i$ chooses $s_i \in \{1, \dots, 20\}$ and receives a primary bonus of $+10$ for undercutting the left neighbor by exactly 3 and a secondary bonus of $+3$ for exceeding the right neighbor by exactly 7 (Table~\ref{tab:four-games}). A level-$k$ player best-responds to the belief that \emph{both} neighbors are level-$(k-1)$.
 
\begin{table*}[!t]
    \centering
    \small
    \setlength{\tabcolsep}{3pt}
    \caption{Summary of the game structures with level-$k$ distinguishability.}
    \label{tab:four-games}
    \begin{tabular}{l p{0.18\textwidth} p{0.16\textwidth} p{0.21\textwidth} p{0.19\textwidth}}
        \toprule
        & \textbf{11--20 (Mod.)} & \textbf{All-Pay Auction} & \textbf{Nash Demand} & \textbf{Ring 11--20} \\
        \midrule
        Players & 2 & 2 & 2 & 5 (ring) \\
        Strategy space $S$ & $\{50,\dots,70\}$ & $\{0,\dots,10\}$ & $\{1,\dots,20\}$ & $\{1,\dots,20\}$ \\
        Payoff & $u_i(s_i,s_j) = s_i$ $+\,24\,\mathbf{1}[s_i = s_j-2]$ $+\,8\,\mathbf{1}[s_i = s_j-1]$
        & $u_i(s_i,s_j) = 12\,\mathbf{1}[s_i > s_j]$ $+\,6\,\mathbf{1}[s_i = s_j] - s_i$
        & $u_i(s_i,s_j) = s_i\,\mathbf{1}[s_i+s_j \leq 20]$ $+\,0.8\,s_i\,\mathbf{1}[s_i+s_j > 20]\,\mathbf{1}[s_i < s_j]$
        & $u_i(s_i,\mathbf{s}_{-i}) = s_i$ $+\,10\,\mathbf{1}[s_i = s_{i-1}-3]$ $+\,3\,\mathbf{1}[s_i = s_{i+1}+7]$ \\
        \addlinespace[4pt]
        $a^{(0)}$ & 70 & 0 & 20 & 20 \\
        \addlinespace[2pt]
        IDR sequence & Decreasing & Increasing & Decreasing & Periodic \\
        \addlinespace[2pt]
        $K_{\max}$ & 10 & 10 & 10 & 9 \\
        \bottomrule
    \end{tabular}
\end{table*}
 
\begin{figure}[!t]
    \centering
    \begin{tikzpicture}[>=stealth,
        player/.style={circle, draw, minimum size=8mm, inner sep=1pt, font=\scriptsize}]
        \node[player] (Pi)  at (90:1.4cm)  {$i$};
        \node[player] (Pp1) at (162:1.4cm) {$i{+}1$};
        \node[player] (Pp2) at (234:1.4cm) {$i{+}2$};
        \node[player] (Pm2) at (306:1.4cm) {$i{-}2$};
        \node[player] (Pm1) at (18:1.4cm)  {$i{-}1$};
        \foreach \a/\b in {Pi/Pm1, Pp1/Pi, Pp2/Pp1, Pm1/Pm2, Pm2/Pp2}
            \draw[->, thick] (\a) to[bend left=18] (\b);
        \foreach \a/\b in {Pi/Pp1, Pp1/Pp2, Pp2/Pm2, Pm2/Pm1, Pm1/Pi}
            \draw[->, dashed] (\a) to[bend left=18] (\b);
    \end{tikzpicture}
    \caption{The Ring 11--20 game (player indices modulo 5). Solid arrows point from each player $i$ to the left neighbor $i-1$ carrying the primary $+10$ bonus ($s_i = s_{i-1} - 3$); dashed arrows point to the right neighbor $i+1$ carrying the secondary $+3$ bonus ($s_i = s_{i+1} + 7$).}
    \label{fig:ring}
\end{figure}
 
\begin{remark}
\label{rem:ring-periodicity}
The Ring 11--20 IDR sequence is periodic with period 10 and non-monotone \textit{level-k} distinguishability up to $K_{\max} = 9$.
\end{remark}
 
\begin{proof_sketch}
The best response to a common neighbor action $c$ has two cases: the primary-bonus action $c - 3$ when $c \geq 14$, and the secondary-bonus action $c + 7$ when $11 \leq c \leq 13$. Iterating from the anchor $a^{(0)} = 20$ generates the sequence $20, 17, 14, 11, 18, 15, 12, 19, 16, 13$, whose ten values are pairwise distinct, after which $a^{(10)} = 20$ returns to the anchor. Hence $k \mapsto a^{(k)}$ is injective through depth 9 and the sequence is periodic with period 10; the upward jumps at $a^{(4)}$ and $a^{(7)}$ give non-monotonicity.
\end{proof_sketch}
 
The structure of Ring 11--20 also has the benefit that the IDR sequences are non-monotonic, unlike the standard canonical games. This enables us to test that a model does not trivially extrapolate a strategy; for example, by subtracting a constant integer at each depth a model might spuriously generate actions that correspond to increasing depth of reasoning.
 
\subsubsection{Canonical Game Adaptations}
\label{sec:canonical-games}
We complement the Ring 11--20 game with adaptation of three canonical two-player games---the 11--20 money-request game \citep{arad2012eleven}, All-pay auction \citep{baye1996allpay}, and the Nash demand game \citep{nash1953two}. Our adaptation involves constructing the payoffs of each of these games to adhere to the \emph{level-k distinguishability} condition in Definition~\ref{def:distinguishability} over the full range of strategic depth we wish to test in our experiments. Because the 11--20 game is often used as an instrument to infer level-k reasoning, we make an additional minor modification to the strategy space and payoffs through an affine transformation of the original game, so that the standard published solution is no longer optimal in the version the models face.

\subsection{Intermediate Reasoning and Action Consistency}
\label{sec:cot-action-consistency}
 
A key component of level-$k$ family of models is a player's inference about the depth of strategic reasoning of others \cite{stahl1994experimental}. A level-$(k{+}1)$ player is, by construction, one who believes its opponent reasons at level $k$. When an LLM playing a game receives information about the opponent, the intermediate tokens of the chain of thought (CoT) contain explicit statements of the model's belief about the opponent's action, while the submitted action is a separate behavioural trace of whatever computation actually determined the choice. Given that these two signals have been observed to be logically uncorrelated in LLM CoT \cite{kambhampati2026position}, we extract a depth estimate from each and measure their consistency using three primary metrics: (i) \emph{Accuracy}, which evaluates whether the behavioural action matches the target level of the IDR sequence; (ii) \emph{Best-response consistency}, which evaluates whether the behavioural action is a best response to the model's \emph{own stated} belief (allowing for multiple best responses). Note that the two can be orthogonal---a model can be consistent to a wrong belief, or accurate despite an inconsistent narration in CoT trace; (iii)  \emph{Internal consistency}, which evaluates the relationship between the CoT and the behaviour depth estimates. If $K_L$ and $K_B$ denote the model's estimate about the opponent's strategic depth in CoT signals and the action consistent with that belief, $K_L > K_B$ indicates the model verbally performs deeper reasoning than its actions realise, while $K_B > K_L$ indicates that the action uses a reasoning level never articulated in the CoT.

\subsubsection{Recursive and Inductive Conditioning}
\label{sec:two-phase-design}
 
Since a level-$k$ reasoning agent needs information about its opponent's strategic depth in order to best respond, this information can be fed to the model in two different ways. Either by specifying the explicit level of the opponent, which requires the agent to engage in recursive reasoning by simulating the opponent's optimal strategy in an iterated way. Or by giving samples from a distribution of opponents' strategy, which requires the agent to infer the level from the behavioural trace and best respond. We construct one experimental condition for each of these two methods.
 
\textit{Recursive conditioning.} The model receives a description of the opponent that conveys the opponent's strategic depth. With this information, the model needs to extract the level-0 ($L_0$) anchor, simulate the opponent's reasoning chain, form a belief, and compute the best response. 
 
\textit{Inductive conditioning} The model receives no opponent description but it can make a tool call to query the opponent's history to obtain 30 raw action samples from the opponent's previous play, generated at a target depth unknown to the model. We sample these actions following the quantal best response of the Quantal Cognitive Hierarchy model \citep{wright2010beyond}. For an opponent of level $k \geq 1$ each response $h_t$ is drawn with
\[
    P(h_t = s) \;\propto\; \exp\!\left(\lambda \cdot \frac{u\big(s;\, a^{(k-1)}\big) - u_{\min}}{u_{\max} - u_{\min}}\right),
\]
with $\lambda \in \{1, 3, 10\}$.
 
\begin{figure*}[!htbp]
\centering
\includegraphics[width=0.78\textwidth]{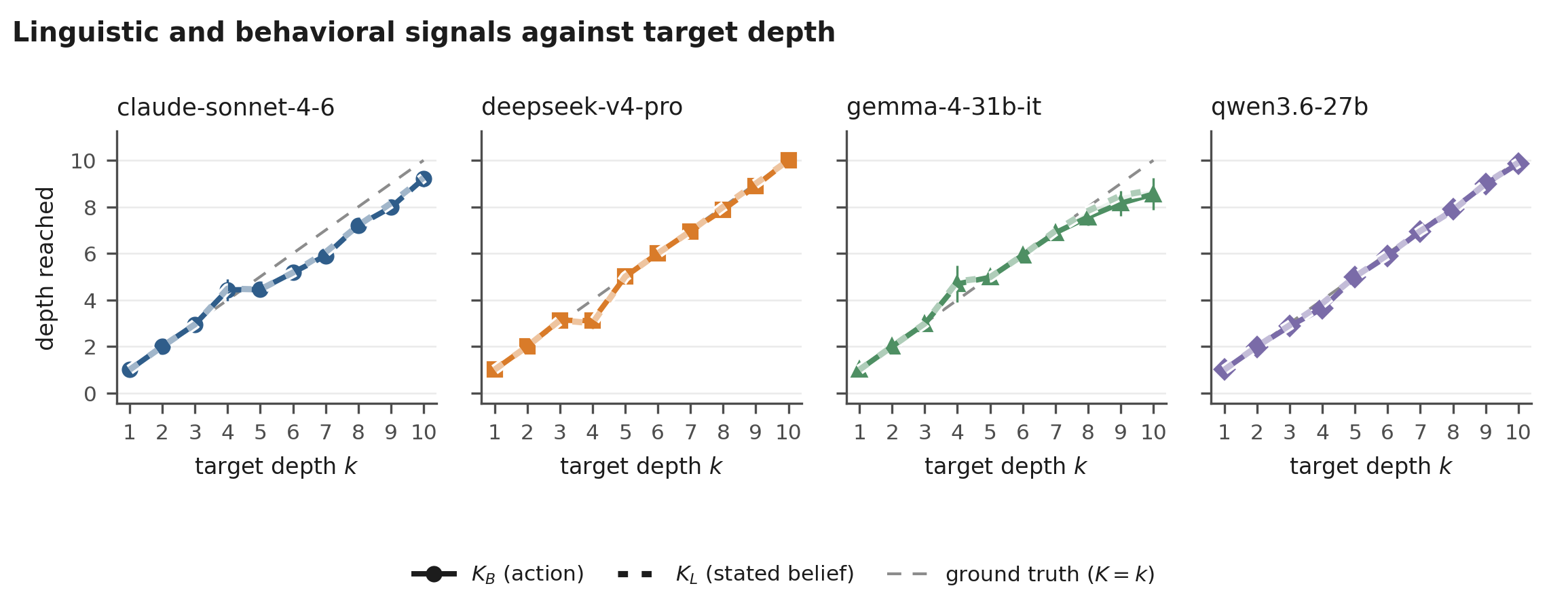}
\caption{Recursive condition: Depth limit---linguistic and behavioral signals as a function of target depth. Each panel is one model; solid lines and markers show $K_B$ (behavioral depth), dashed lines show $K_L$ (linguistic depth), and the gray dashed diagonal marks where $K$ equals target $k$.
}
\label{fig:depth-ceiling-fig}
\end{figure*}
\section{Experiment and Results}
\label{sec:models-execution}

\begin{figure*}[!htbp]
\centering
\includegraphics[width=0.78\textwidth]{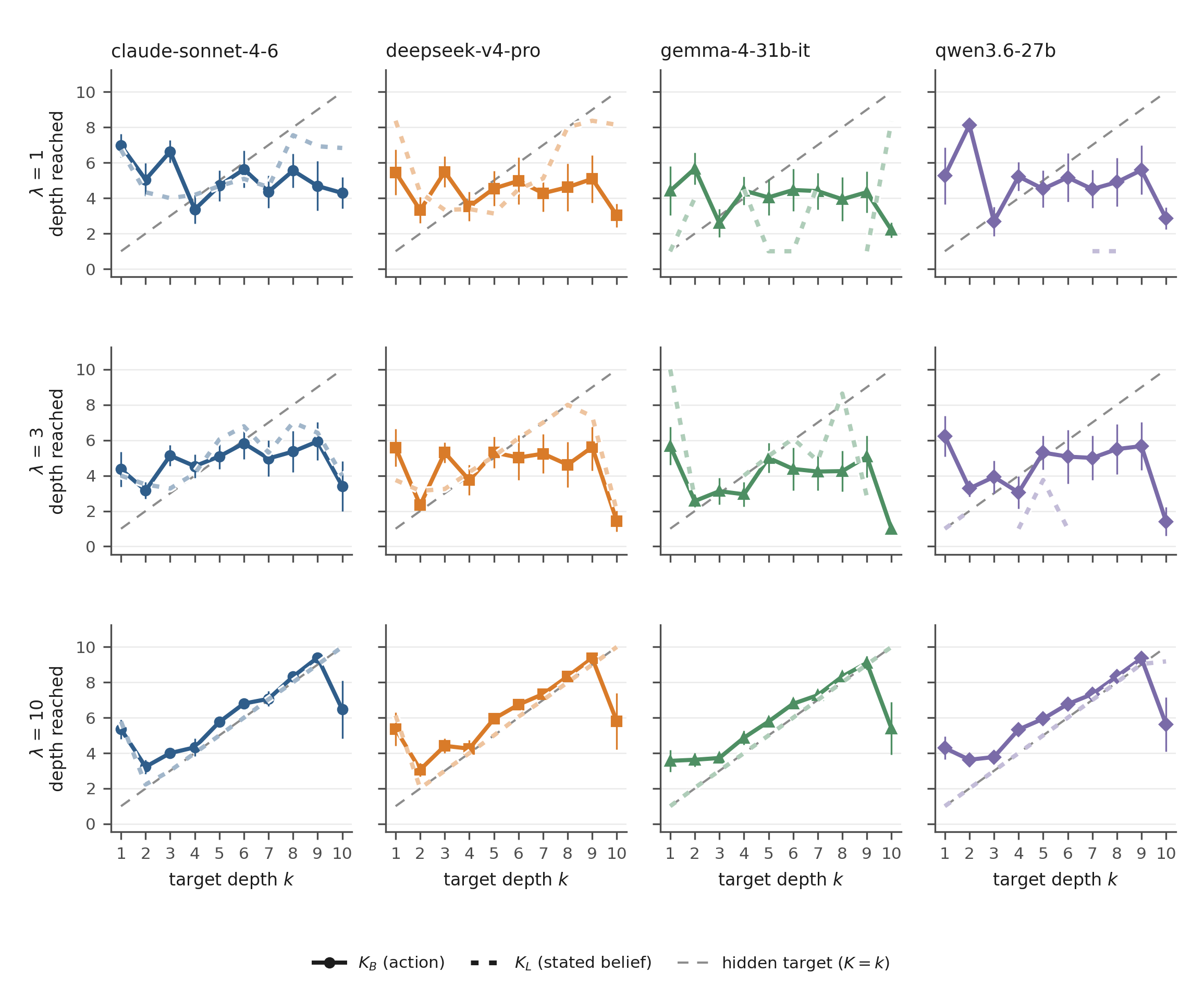}
\caption{Inductive condition: Depth limit---linguistic and behavioral signals as a function of target depth. Columns are models and rows are the sampling precision $\lambda$; solid lines and markers show $K_B$ (behavioral depth), dashed lines show $K_L$ (linguistic depth), and the gray dashed diagonal marks where $K$ equals target $k$.
}
\label{fig:inductive-depth}
\end{figure*}

In our experiments, we test the strategic reasoning depth of four models: Gemma 4 31B, Qwen 3.6 27B, Claude Sonnet 4.6, and DeepSeek v4 Pro, at temperature 0.2 with reasoning mode enabled where supported. Each model plays four games---the 11-20 Game, the All-Pay Auction, the Nash Demand Game, and Ring 11-20---against opponents at levels $L_0$--$L_9$ ($k_{\text{target}}=1$--10). The recursive condition encodes the opponent's reasoning depth in language; the inductive condition removes this description entirely and supplies only sampled historical actions, from which the model must infer the opponent's level. The language in CoT is evaluated by a blinded judge (Claude Opus 4.8) that extracts stated values without seeing the opponent description, target level, or submitted action; correctness, iterated depth response sequence (IDR), and $K_L$/$K_B$ signals are then computed deterministically from the extraction.

\providecommand{\rung}[1]{{\scriptsize(#1)}}
\begin{table*}[ht]
\small
\centering
\caption{Accuracy of reaching target strategic depth across games and condition.
Parenthesised: within-one-level accuracy, $|K_B - k_{\text{target}}| \le 1$. Rec.\ = recursive condition;
Ind.\ = inductive condition at $\lambda=10$, the sampling precision at which
the opponent's level is reliably recoverable from the observed history. Full $\lambda$ sweep in Appendix~\ref{app:lambda-sweep}.}
\label{tab:results-by-model-game}
\begin{tabular}{l cc cc cc cc}
\toprule
& \multicolumn{2}{c}{11--20 (Mod.)} & \multicolumn{2}{c}{All-Pay Auction} & \multicolumn{2}{c}{Nash Demand} & \multicolumn{2}{c}{Ring 11--20} \\
\cmidrule(lr){2-3} \cmidrule(lr){4-5} \cmidrule(lr){6-7} \cmidrule(lr){8-9}
Model & Rec. & Ind. & Rec. & Ind. & Rec. & Ind. & Rec. & Ind. \\
\midrule
Sonnet 4.6        & 49.0 \rung{99.0} & 81.0 \rung{81.0} & 54.0 \rung{100.0} & 14.0 \rung{86.0} & 49.0 \rung{91.0} & 2.0 \rung{6.0} & 41.1 \rung{97.8} & 61.1 \rung{61.1} \\
Qwen 3.6 27B      & 91.0 \rung{95.5} & 90.5 \rung{90.5} & 92.5 \rung{100.0} & 10.0 \rung{80.0} & 83.5 \rung{94.0} & 1.0 \rung{1.0} & 86.0 \rung{98.3} & 64.4 \rung{64.4} \\
Gemma 4 31B       & 88.5 \rung{96.5} & 90.0 \rung{90.0} & 87.5 \rung{95.5}  & 14.0 \rung{77.5} & 88.0 \rung{89.5} & 2.5 \rung{4.0} & 71.1 \rung{80.6} & 65.6 \rung{71.1} \\
DeepSeek v4 Pro   & 89.5 \rung{99.5} & 80.5 \rung{90.5} & 91.0 \rung{100.0} & 18.0 \rung{89.5} & 86.5 \rung{89.0} & 0.5 \rung{0.5} & 86.7 \rung{89.4} & 55.0 \rung{56.1} \\
\bottomrule
\end{tabular}
\end{table*}
\normalsize

\subsection{Recursive condition}
\label{sec:depth-ceiling}

The \textit{target level} $k_{\text{target}}$ is the depth a model must reason at to play the trial correctly. It is one level above the opponent it faces: against a level-$j$ opponent, correct play is to best-respond to $a^{(j)}$, which is level $j+1$ play, so $k_{\text{target}} = j+1$. Opponents run from $L_0$ to $L_9$, giving $k_{\text{target}} = 1$ to 10. The single correct action for a trial is therefore $a^{(k_{\text{target}})}$, the entry at that position in the game's IDR sequence, and a trial is scored accurate when the submitted action equals it. For the recursive condition, we provide information about the opponent's level both through natural language of how sophisticated they are (levels 1-4) and direct reference to their levels (higher than level 4).

Figure~\ref{fig:depth-ceiling-fig} shows how two depth signals track the target level. The behavioral signal $K_B$ is the position of the submitted action in the game's IDR sequence. The linguistic signal $K_L$ is the position of the opponent action the CoT says it expects, plus one: a model that expects a level-$j$ action and best-responds to it is itself reasoning at level $j+1$. Both signals are on the same scale as $k_{\text{target}}$, so a model that reasons to the required depth has $K_B = K_L = k_{\text{target}}$. Each signal is defined only when the value it scores matches exactly one position in the IDR sequence.

We observe \textit{internal consistency} between intermediate reasoning and action for almost all models in Figure~\ref{fig:depth-ceiling-fig}. $K_L$ and $K_B$ are tightly coupled: the dashed curves ($K_L$) and solid curves ($K_B$) are nearly indistinguishable throughout the full range $k=1$ --10. That is, we see neither $K_L > K_B$, where a model states a chain deeper than its action is carried out, nor $K_B > K_L$, where a model submits an action at a depth its CoT never states. \\
The \textit{accuracy} in choosing actions consistent with increasing the depth of the target of strategic reasoning is also high in the models. Sonnet 4.6 demonstrates behaviour that is one level lower than the target beyond level 4, and gemma-4 shows a tendency to lose the capacity at a deeper level (9 and 10).

\subsection{Inductive condition}
The inductive condition removes the opponent description and replaces it with a tool call that returns the opponent's historical trace; Figure~\ref{fig:inductive-depth} shows how the two depth signals behave under this condition. Those 30 actions are drawn by quantal best response against $a^{(k-1)}$ with precision $\lambda \in \{1, 3, 10\}$; $\lambda$ therefore determines how much information about the opponent's level the trace carries. When $\lambda=1$ and $\lambda=3$, the randomness in the trace caused decrease in accuracy, and $K_L$ does not align with $K_B$ at any depth. When $\lambda=10$, both signals track $k$ closely everywhere except $k=1$ and $k=10$, and the four models are alike in doing so. Over that same range the two signals also track each other, coming apart only at those two endpoints. At $k=1$ the opponent is non-strategic and a single best response is the whole task, but the models iterate several steps past it, so the action goes well beyond the correct level, even though the model's stated belief identifies it accurately. At $k=10$ the models identify the opponent almost exactly, yet the correct action is an extreme of the strategy space---for example, 50 (the smallest possible request) in the modified 11--20 game---but in 64 out of 80 trials the models fall back to the Level-0 default of 70 instead, so $K_B$ ends up far below the model's stated belief, which has already correctly reached the target.

\subsection{Game specific heterogeneity}

Table~\ref{tab:results-by-model-game} reports accuracy in four games and both conditions, each cell pooling the ten target depths $k_{\text{target}}=1$ -- $10$. In parentheses, we also report the accuracy by relaxing the metric such that the model hits the strategic depth at one level higher or lower for comparative analysis. A larger difference between the numbers indicates that the model came close to the exact target depth by one. In the recursive condition the four games behave alike, with only Sonnet standing well below the other three. Allowing for off-by-one error puts every model above 80\% in every game, and Sonnet wins the most of the four because it consistently plays one level short of the target (Figure~\ref{fig:depth-ceiling-fig}). In the inductive condition, accuracy remains high in the 11-20 Game, falls to roughly 55--66\% in Ring 11-20, to 10--18\% in the All-Pay Auction, and to under 3\% in the Nash Demand Game. Relaxing the accuracy criterion barely improves performance in the 11-20 Game and Ring 11-20; the All-Pay Auction shows the largest gain of any cell in the table, while the Nash Demand Game shows none. We analyze the nature of these errors more in the next section.

\begin{figure*}[!t]
\centering
\includegraphics[width=0.78\textwidth]{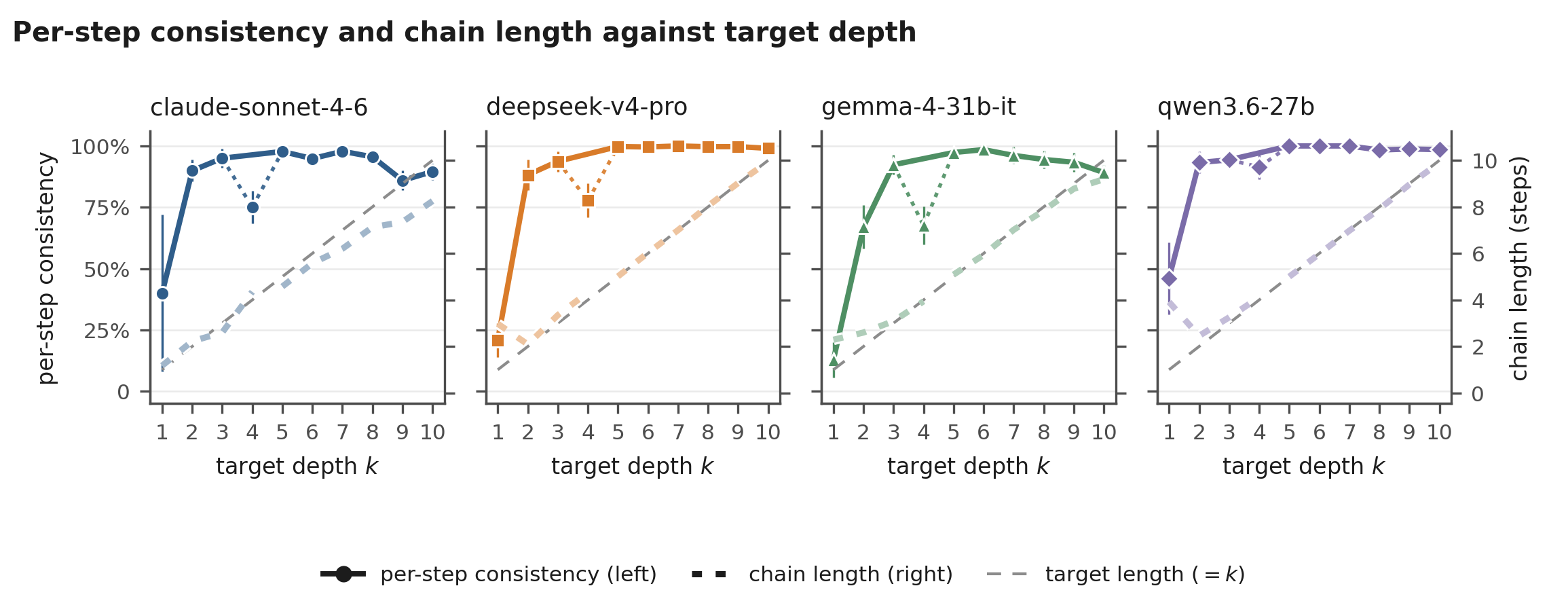}
\caption{Per-step best-response consistency and chain length against target depth $k$, over recursive-condition trials, one panel per model. The solid line with markers is per-step consistency on the left axis---the share of adjacent steps in a model's written run that are exact best responses---with 95\% confidence intervals. The heavy dotted line is mean chain length on the right axis, and the grey dashed diagonal is the target length, $m = k$.
\emph{Plotting convention:} both series break between $k=4$ and $k=5$, where the form of the opponent description changes: chain length is drawn as two separate segments, and the consistency line runs from $k=3$ straight to $k=5$, with $k=4$ hung off it on fine dotted spurs.}
\label{fig:chain-breakdown}
\end{figure*}

\subsection{Recursive Reasoning Sequence Consistency}
\label{sec:chain-length-accuracy}

Next, we analyse \textit{why} a model fails to reason at a higher strategic depth; as is the case for Sonnet-4.6. One possible explanation for this failure is an error drift: the model computes one best response incorrectly at a particular depth, and since that error persists, the chain of recursive reasoning ends on an action that is not on the correct IDR sequence. 
To test this, we let the model recursively write out a run of actions, one for each level it simulates---in the 11--20 game, 70 for the level-0 opponent, then 68 for level 1, then 66 for level 2. Let $c = (c_1, \dots, c_m)$ be that run as the blinded judge extracts it from the CoT. Based on this data, we record the following: The \textbf{length}, $m$, of the sequence, and its \textbf{per-step correctness}, i.e., the share of adjacent pairs for which $c_{t+1} \in \operatorname{BR}(c_t)$. To reach the target strategic depth, both needs to be correct. We show this analysis in Figure~\ref{fig:chain-breakdown}; the grey diagonal in Figure~\ref{fig:chain-breakdown} is the correct target length and the solid line shows the proportion of per-step correctness in best response. For the model (Sonnet-4.6) that has errors in reaching the correct strategic depth, we find that the errors arise from the incorrect \emph{length} of the IDR sequence chain it needs to recursively best respond, rather than the best response computation itself.

\subsection{Strategic Theory-of-mind in Inductive condition}
\label{sec:spurious-recursion}

Under the inductive condition, the model only has access to the samples of the opponents' gameplay. In order to infer the best response of the opponent from those samples, the model needs to apply a strategic theory of mind, to, first infer what the opponent beliefs about other players. These inferences are captured in the intermediate CoT through statements of the form ``the opponent expects me to choose X, so\ldots''. We analyze the rate of these \textit{explicit} strategic mentalizing in language under the inductive condition in Table \ref{tab:spurious-recursion}. We see that models that perform better with regard to strategic depth have a higher proportion of explicit mentalizing. When we compare the accuracy (with respect to achieving the target depth of reasoning), we observe an improvement when the model uses explicit mentalizing in its intermediate CoT.

\begin{table}[!t]
    \centering
    \small
    \setlength{\tabcolsep}{4pt}
    \caption{Strategic theory-of-mind mentalizing under the inductive
    condition. \emph{Mentalizing} is the share of trials whose chain of thought
    states that the opponent reasons about the model's own choice; the two
    accuracy columns split those trials from the rest.}
    \label{tab:spurious-recursion}
    \begin{tabular}{@{}l r r r@{}}
        \toprule
        \multirow{2}{*}{Model} & \multirow{2}{*}{Mentalizing}
          & \multicolumn{2}{c}{Accuracy} \\
        \cmidrule(lr){3-4}
        & & with & without \\
        \midrule
        Sonnet 4.6 & 6.0\% & 40.3\% & 19.3\% \\
        Qwen 3.6 27B & 28.3\% & 29.6\% & 12.4\% \\
        Gemma 4 31B & 39.3\% & 20.6\% & 15.7\% \\
        DeepSeek v4 Pro & 65.7\% & 21.4\% & 10.4\% \\
        \bottomrule
    \end{tabular}
\end{table}

\iffalse
\begin{figure*}[!t]
\centering
\includegraphics[width=0.78\textwidth]{fig4_spurious_recursion.png}
\caption{Recursive statement percentage against target depth $k$ in both conditions. Solid lines with markers show the recursive condition, with 95\% confidence intervals; the fainter dotted line shows the inductive condition, pooled over $\lambda \in \{1,3,10\}$. As in Figure~\ref{fig:chain-breakdown}, the recursive-condition line joins $k=3$ directly to $k=5$ and draws $k=4$ on dotted spurs, because the form of the opponent description changes at $k=4$.}
\label{fig:spurious-recursion}
\end{figure*}
\fi
\section{Conclusions}
This paper addressed a validity gap in existing evaluations of strategic depth in LLMs. The canonical games used by prior work were built to probe the limits of human reasoning, and nothing guarantees that distinct depths of iterated reasoning map to distinct observable actions. We made this requirement precise as the level-k
distinguishability condition, and by direct construction built a novel five-player ring game, Ring 11--20, and adapted three canonical two-player games so that their iterated depth-response sequences remain injective over the depths we wished to test. Applying these instruments to four models---Gemma 4 31B, Qwen 3.6 27B, Claude Sonnet 4.6, and DeepSeek v4 Pro---across recursive and inductive conditioning, we found accuracy to be high across the models overall, with degradation showing in model-specific ways. Sonnet 4.6's behaviour falls one level short of the target beyond depth 4, while Gemma 4 shows a tendency to lose the capacity only at the deeper levels of 9 and 10.
Looking inside the recursive condition, we found that the models' stated beliefs about their opponent ($K_L$) and their submitted actions ($K_B$) stayed tightly coupled across almost the entire depth range, so degradation at higher depths was not a case of a model acting on a belief it never stated, or vice versa. Tracing Sonnet 4.6's failures further, through its recursive reasoning depth, showed that the error was one of incorrect inference of the length of the iterated sequence the model needed to compute, rather than the best response calculations.
As for limitations, we only tested up to a depth of 10, so we cannot draw conclusions about deeper reasoning past that. We also used a single LLM judge (Claude Opus 4.8) to read and score every chain of thought, so the results are influenced by any mistake this judge makes.

\bibliography{aaai2026}

@article{nagel1995unraveling,
  author  = {Nagel, Rosemarie},
  title   = {Unraveling in Guessing Games: An Experimental Study},
  journal = {American Economic Review},
  year    = {1995},
  volume  = {85},
  number  = {5},
  pages   = {1313--1326}
}

@article{stahl1994experimental,
  author    = {Stahl, Dale O. and Wilson, Paul W.},
  title     = {Experimental Evidence on Players' Models of Other Players},
  journal   = {Journal of Economic Behavior \& Organization},
  year      = {1994},
  volume    = {25},
  number    = {3},
  pages     = {309--327},
  publisher = {Elsevier}
}

@article{stahl1995players,
  author    = {Stahl, Dale O. and Wilson, Paul W.},
  title     = {On Players' Models of Other Players: Theory and Experimental Evidence},
  journal   = {Games and Economic Behavior},
  year      = {1995},
  volume    = {10},
  number    = {1},
  pages     = {218--254},
  publisher = {Elsevier}
}

@article{costa2001cognition,
  author  = {Costa-Gomes, Miguel and Crawford, Vincent P. and Broseta, Bruno},
  title   = {Cognition and Behavior in Normal-Form Games: An Experimental Study},
  journal = {Econometrica},
  year    = {2001},
  volume  = {69},
  number  = {5},
  pages   = {1193--1235}
}

@article{costa2006cognition,
  author  = {Costa-Gomes, Miguel A. and Crawford, Vincent P.},
  title   = {Cognition and Behavior in Two-Person Guessing Games: An Experimental Study},
  journal = {American Economic Review},
  year    = {2006},
  volume  = {96},
  number  = {5},
  pages   = {1737--1768}
}

@article{camerer2004cognitive,
  author  = {Camerer, Colin F. and Ho, Teck-Hua and Chong, Juin-Kuan},
  title   = {A Cognitive Hierarchy Model of Games},
  journal = {Quarterly Journal of Economics},
  year    = {2004},
  volume  = {119},
  number  = {3},
  pages   = {861--898}
}

@incollection{camerer2004behavioural,
  author    = {Camerer, Colin F. and Ho, Teck-Hua and Chong, Juin-Kuan},
  title     = {Behavioural Game Theory: Thinking, Learning and Teaching},
  booktitle = {Advances in Understanding Strategic Behaviour: Game Theory,
               Experiments and Bounded Rationality},
  pages     = {120--180},
  year      = {2004},
  publisher = {Springer}
}

@article{crawford2013structural,
  author  = {Crawford, Vincent P. and Costa-Gomes, Miguel A. and Iriberri, Nagore},
  title   = {Structural Models of Nonequilibrium Strategic Thinking: Theory,
             Evidence, and Applications},
  journal = {Journal of Economic Literature},
  year    = {2013},
  volume  = {51},
  number  = {1},
  pages   = {5--62}
}

@article{arad2012eleven,
  author  = {Arad, Ayala and Rubinstein, Ariel},
  title   = {The 11--20 Money Request Game: A Level-$k$ Reasoning Study},
  journal = {American Economic Review},
  year    = {2012},
  volume  = {102},
  number  = {7},
  pages   = {3561--3573}
}

@article{kneeland2015identifying,
  author  = {Kneeland, Terri},
  title   = {Identifying Higher-Order Rationality},
  journal = {Econometrica},
  year    = {2015},
  volume  = {83},
  number  = {5},
  pages   = {2065--2079}
}

@techreport{cerigioni2019higher,
  author      = {Cerigioni, Francesco and Germano, Fabrizio and
                 Rey-Biel, Pedro and Zuazo-Garin, Peio},
  title       = {Higher Orders of Rationality and the Structure of Games},
  institution = {Barcelona School of Economics},
  type        = {Working Paper},
  number      = {1120},
  year        = {2019}
}

@article{coricelli2009neural,
  author    = {Coricelli, Giorgio and Nagel, Rosemarie},
  title     = {Neural Correlates of Depth of Strategic Reasoning in Medial
               Prefrontal Cortex},
  journal   = {Proceedings of the National Academy of Sciences},
  year      = {2009},
  volume    = {106},
  number    = {23},
  pages     = {9163--9168},
  publisher = {National Academy of Sciences}
}

@article{georganas2015persistence,
  author    = {Georganas, Sotiris and Healy, Paul J. and Weber, Roberto A.},
  title     = {On the Persistence of Strategic Sophistication},
  journal   = {Journal of Economic Theory},
  year      = {2015},
  volume    = {159},
  number    = {PA},
  pages     = {369--400},
  publisher = {Elsevier}
}

@article{wright2017predicting,
  author    = {Wright, James R. and Leyton-Brown, Kevin},
  title     = {Predicting Human Behavior in Unrepeated, Simultaneous-Move Games},
  journal   = {Games and Economic Behavior},
  year      = {2017},
  volume    = {106},
  pages     = {16--37},
  publisher = {Elsevier}
}

@inproceedings{wright2010beyond,
  author    = {Wright, James R. and Leyton-Brown, Kevin},
  title     = {Beyond Equilibrium: Predicting Human Behavior in Normal-Form Games},
  booktitle = {Proceedings of the Twenty-Fourth {AAAI} Conference on Artificial
               Intelligence},
  year      = {2010},
  pages     = {901--907}
}

@inproceedings{hartford2016deep,
  author    = {Hartford, Jason S. and Wright, James R. and Leyton-Brown, Kevin},
  title     = {Deep Learning for Predicting Human Strategic Behavior},
  booktitle = {Advances in Neural Information Processing Systems},
  volume    = {29},
  year      = {2016}
}

@article{baye1996allpay,
  author  = {Baye, Michael R. and Kovenock, Dan and de Vries, Casper G.},
  title   = {The All-Pay Auction with Complete Information},
  journal = {Economic Theory},
  year    = {1996},
  volume  = {8},
  number  = {2},
  pages   = {291--305}
}

@article{nash1953two,
  author  = {Nash, John},
  title   = {Two-Person Cooperative Games},
  journal = {Econometrica},
  year    = {1953},
  volume  = {21},
  number  = {1},
  pages   = {128--140}
}

@article{mallard2012modelling,
  author    = {Mallard, Graham},
  title     = {Modelling Cognitively Bounded Rationality: An Evaluative Taxonomy},
  journal   = {Journal of Economic Surveys},
  year      = {2012},
  volume    = {26},
  number    = {4},
  pages     = {674--704},
  publisher = {Wiley}
}

@inproceedings{ng2000algorithms,
  author    = {Ng, Andrew Y. and Russell, Stuart J.},
  title     = {Algorithms for Inverse Reinforcement Learning},
  booktitle = {Proceedings of the Seventeenth International Conference on
               Machine Learning},
  year      = {2000},
  pages     = {663--670}
}

@inproceedings{duetting2024mechanism,
  author    = {D{\"u}tting, Paul and Mirrokni, Vahab and Paes Leme, Renato and
               Xu, Haifeng and Zuo, Song},
  title     = {Mechanism Design for Large Language Models},
  booktitle = {Proceedings of the {ACM} Web Conference 2024},
  pages     = {144--155},
  year      = {2024}
}

@book{borgers2015introduction,
  author    = {B{\"o}rgers, Tilman},
  title     = {An Introduction to the Theory of Mechanism Design},
  year      = {2015},
  publisher = {Oxford University Press}
}

@inproceedings{kephart2016revelation,
  author    = {Kephart, Andrew and Conitzer, Vincent},
  title     = {The Revelation Principle for Mechanism Design with Reporting Costs},
  booktitle = {Proceedings of the 2016 {ACM} Conference on Economics and Computation},
  pages     = {85--102},
  year      = {2016}
}

@techreport{horton2023large,
  author      = {Horton, John J. and Filippas, Apostolos and Manning, Benjamin S.},
  title       = {Large Language Models as Simulated Economic Agents: What Can
                 We Learn from Homo Silicus?},
  institution = {National Bureau of Economic Research},
  year        = {2023}
}

@inproceedings{fu2024drive,
  author       = {Fu, Daocheng and Li, Xin and Wen, Licheng and Dou, Min and
                  Cai, Pinlong and Shi, Botian and Qiao, Yu},
  title        = {Drive Like a Human: Rethinking Autonomous Driving with Large
                  Language Models},
  booktitle    = {2024 {IEEE/CVF} Winter Conference on Applications of Computer
                  Vision Workshops ({WACVW})},
  pages        = {910--919},
  year         = {2024},
  organization = {IEEE}
}

@inproceedings{sarkar2022generalized,
  author    = {Sarkar, Atrisha and Larson, Kate and Czarnecki, Krzysztof},
  title     = {Generalized Dynamic Cognitive Hierarchy Models for Strategic
               Driving Behavior},
  booktitle = {Proceedings of the {AAAI} Conference on Artificial Intelligence},
  volume    = {36},
  number    = {5},
  pages     = {5173--5182},
  year      = {2022}
}

@article{alon2026,
  author  = {Alon, Nitay and Barnby, Joseph M. and Sarkadi, Stefan and
             Schulz, Lion and Rosenschein, Jeffrey S. and Dayan, Peter},
  title   = {{\ensuremath{\aleph}}-{IPOMDP}: Mitigating Deception in a Cognitive
             Hierarchy with Off-Policy Counterfactual Anomaly Detection},
  journal = {Journal of Artificial Intelligence Research},
  year    = {2026},
  volume  = {85}
}

@article{jia2026llm,
  author  = {Jia, Jingru and Yuan, Zehua and Pan, Junhao and
             McNamara, Paul and Chen, Deming},
  title   = {{LLM} Strategic Reasoning: Agentic Study through Behavioral Game Theory},
  journal = {Advances in Neural Information Processing Systems},
  volume  = {38},
  pages   = {58318--58350},
  year    = {2025}
}

@inproceedings{zhang2025k,
  author    = {Zhang, Yadong and Mao, Shaoguang and Ge, Tao and Wang, Xun and
               Xia, Yan and Lan, Man and Wei, Furu},
  title     = {{K}-Level Reasoning: Establishing Higher Order Beliefs in Large
               Language Models for Strategic Reasoning},
  booktitle = {Proceedings of the 2025 Conference of the Nations of the Americas
               Chapter of the Association for Computational Linguistics: Human
               Language Technologies (Volume 1: Long Papers)},
  pages     = {7212--7234},
  year      = {2025}
}

@inproceedings{trencsenyi2025approximating,
  author       = {Trencsenyi, Vince and Mensfelt, Agnieszka and Stathis, Kostas},
  title        = {Approximating Human Strategic Reasoning with {LLM}-Enhanced
                  Recursive Reasoners Leveraging Multi-Agent Hypergames},
  booktitle    = {International Workshop on Multi-Agent Systems and Agent-Based
                  Simulation},
  pages        = {15--27},
  year         = {2025},
  organization = {Springer}
}

@inproceedings{kambhampati2026position,
  author    = {Kambhampati, Subbarao and Valmeekam, Karthik and
               Bhambri, Siddhant and Palod, Vardhan and Saldyt, Lucas Paul and
               Stechly, Kaya and Samineni, Soumya Rani and Kalwar, Durgesh and
               Biswas, Upasana},
  title     = {Position: Stop Anthropomorphizing Intermediate Tokens as
               Reasoning/Thinking Traces!},
  booktitle = {Forty-third International Conference on Machine Learning Position
               Paper Track},
  year      = {2026}
}

@inproceedings{duan2024gtbench,
  author    = {Duan, Jinhao and Zhang, Renming and Diffenderfer, James and
               Kailkhura, Bhavya and Sun, Lichao and Stengel-Eskin, Elias and
               Bansal, Mohit and Chen, Tianlong and Xu, Kaidi},
  title     = {{GTBench}: Uncovering the Strategic Reasoning Capabilities of
               {LLMs} via Game-Theoretic Evaluations},
  booktitle = {Advances in Neural Information Processing Systems},
  volume    = {37},
  pages     = {28219--28253},
  year      = {2024}
}

@article{wang2024tmgbench,
  author  = {Wang, Haochuan and Feng, Xiachong and Li, Lei and Qin, Zhanyue and
             Sui, Dianbo and Kong, Lingpeng},
  title   = {{TMGBench}: A Systematic Game Benchmark for Evaluating Strategic
             Reasoning Abilities of {LLMs}},
  journal = {arXiv preprint arXiv:2410.10479},
  year    = {2024}
}

@inproceedings{huang2025gama,
  author    = {Huang, Jen-tse and Li, Eric John and Lam, Man Ho and
               Liang, Tian and Wang, Wenxuan and Yuan, Youliang and
               Jiao, Wenxiang and Wang, Xing and Tu, Zhaopeng and Lyu, Michael R.},
  title     = {Competing Large Language Models in Multi-Agent Gaming Environments},
  booktitle = {The Thirteenth International Conference on Learning Representations},
  year      = {2025}
}

@article{liu2025chbench,
  author  = {Liu, Hongtao and Du, Zhicheng and Wang, Zihe and Shen, Weiran},
  title   = {{CHBench}: A Cognitive Hierarchy Benchmark for Evaluating Strategic
             Reasoning Capability of {LLMs}},
  journal = {arXiv preprint arXiv:2508.11944},
  year    = {2025}
}

@article{kader2024emergence,
  author  = {Kader, Gavin and Lee, Dongwoo},
  title   = {The Emergence of Strategic Reasoning of Large Language Models},
  journal = {arXiv preprint arXiv:2412.13013},
  year    = {2024}
}

@inproceedings{fan2024rational,
  author    = {Fan, Caoyun and Chen, Jindou and Jin, Yaohui and He, Hao},
  title     = {Can Large Language Models Serve as Rational Players in Game
               Theory? A Systematic Analysis},
  booktitle = {Proceedings of the {AAAI} Conference on Artificial Intelligence},
  volume    = {38},
  number    = {16},
  pages     = {17960--17967},
  year      = {2024}
}

@inproceedings{turpin2023language,
  author    = {Turpin, Miles and Michael, Julian and Perez, Ethan and
               Bowman, Samuel R.},
  title     = {Language Models Don't Always Say What They Think: Unfaithful
               Explanations in Chain-of-Thought Prompting},
  booktitle = {Advances in Neural Information Processing Systems},
  volume    = {36},
  year      = {2023}
}

@article{chen2025reasoning,
  author  = {Chen, Yanda and Benton, Joe and Radhakrishnan, Ansh and
             Uesato, Jonathan and Denison, Carson and Schulman, John and
             Somani, Arushi and Hase, Peter and Wagner, Misha and
             Roger, Fabien and Mikulik, Vlad and Bowman, Samuel R. and
             Leike, Jan and Kaplan, Jared and Perez, Ethan},
  title   = {Reasoning Models Don't Always Say What They Think},
  journal = {arXiv preprint arXiv:2505.05410},
  year    = {2025}
}

@article{barez2025chain,
  author  = {Barez, Fazl and Wu, Tung-Yu and Arcuschin, Iv{\'a}n and
             Lan, Michael and Wang, Vincent and Siegel, Noah and
             Collignon, Nicolas and Neo, Clement and Lee, Isabelle and
             Paren, Alasdair and Bibi, Adel and Trager, Robert and
             Fornasiere, Damiano and Yan, John and Elazar, Yanai and
             Bengio, Yoshua},
  title   = {Chain-of-Thought Is Not Explainability},
  journal = {alphaXiv preprint alphaXiv:2025.02v1},
  year    = {2025}
}

@article{lanham2023measuring,
  author  = {Lanham, Tamera and Chen, Anna and Radhakrishnan, Ansh and
             Steiner, Benoit and Denison, Carson and Hernandez, Danny and
             Li, Dustin and Durmus, Esin and Hubinger, Evan and
             Kernion, Jackson and Perez, Ethan},
  title   = {Measuring Faithfulness in Chain-of-Thought Reasoning},
  journal = {arXiv preprint arXiv:2307.13702},
  year    = {2023}
}

@article{nie2026equilibrium,
  author  = {Nie, Wenhua and Luo, Binhan and Meng, Zijie and
             Jang, Jyh-Shing Roger and Ma, Ching-Wen},
  title   = {Equilibrium Residuals Expose Three Regimes of Matrix-Game Strategic
             Reasoning in Language Models},
  journal = {arXiv preprint arXiv:2605.10410},
  year    = {2026}
}

@article{georgousis2026counterfactual,
  author  = {Georgousis, Dimitrios and Lymperaiou, Maria and
             Dimitriou, Angeliki and Filandrianos, Giorgos and Stamou, Giorgos},
  title   = {Evaluating Counterfactual Strategic Reasoning in Large Language Models},
  journal = {arXiv preprint arXiv:2603.19167},
  year    = {2026}
}

\providecommand{\BR}{\operatorname{BR}}
\providecommand{\br}{\operatorname{br}}
\providecommand{\indic}[1]{\mathbf{1}\!\left[#1\right]}
\newcommand{\apppart}[1]{%
    \par\medskip\noindent\textbf{#1}\par\nobreak\smallskip\noindent}

\appendix
% Body sections are unnumbered (secnumdepth 0, set in the preamble), which left
% every \ref to an appendix section blank.  Appendix sections/subsections are
% numbered (A, A.1, ...) so that those cross-references resolve.
\setcounter{secnumdepth}{2}

\section{Games, IDR Sequences, and Distinguishability}
\label{app:games}

\subsection{Notation and Conventions}
\label{app:game-defs}

\apppart{Best-response set and selection.}
The \emph{best-response set} at a belief $a$ is
\begin{equation}
\BR(a) \;=\; \Big\{\, s\in S \;:\; u(s;a) = \max_{s'\in S} u(s';a) \,\Big\}.
\label{eq:br-set}
\end{equation}
This set need not be a singleton, so Eq.~\eqref{eq:br-iteration} on its own does
not define a sequence. We therefore fix a deterministic \emph{selection}, the
smallest maximiser,
\begin{equation}
\br(a) \;=\; \min \BR(a),
\label{eq:tiebreak}
\end{equation}
and iterate it, $a^{(k)} = \br\big(a^{(k-1)}\big)$, so that exactly one action is
associated with each depth.

Eq.~\eqref{eq:tiebreak} is a convention for defining the IDR sequence, not a
rule the model is asked to follow. Best-response consistency
(Sec.~\ref{app:multi-br}) is accordingly evaluated against the full set: a model
that breaks a tie differently from Eq.~\eqref{eq:tiebreak} has still played a
best response and is recorded as consistent. In the four games the convention
never binds along the sequence itself---$\BR(a^{(k)})$ is a singleton at every
depth from $0$ to $10$---so the sequences we report do not depend on it. The
only belief at which two actions tie is $a = 10$ in Ring 11--20, which lies off
the sequence (Sec.~\ref{app:ladder-derivations}).

Throughout, $s$ denotes the action of the player being scored and $a$ the action
that this player believes the opponent will take. A player in Ring 11--20 has
two neighbours, but a level-$k$ player believes that both of them play
$a^{(k-1)}$, so the payoff again depends on a single opponent action.

\apppart{Matching a number to a depth.}
A number is recorded at depth $k$ when it equals $a^{(k)}$; matching is exact
equality over the integers. The submission tool does not constrain the returned
value to $S$, so a model can submit a non-integer or an out-of-range number;
such a submission matches no depth and is scored as incorrect.

\apppart{Depths and matching windows.}
Every trial describes an opponent at level $\ell \in \{0,\dots,9\}$.
Best-responding to that opponent is play one level above it, so the target depth
is $k_{\text{target}} = \ell + 1$, running from $1$ to $10$, and the action a
trial scores as correct is always one of $a^{(1)},\dots,a^{(10)}$.

Reading a depth off a number the model produced uses a wider window,
$a^{(0)},\dots,a^{(10)}$. The window includes the anchor so that a submission
equal to $a^{(0)}$ is recorded at depth $0$ rather than as unmatched.

Two distinctness properties are used. The first is distinguishability over the
described opponents $\{a^{(0)},\dots,a^{(9)}\}$
(Definition~\ref{def:distinguishability}), which ensures that no two of them act
alike. The second is distinctness of $\{a^{(1)},\dots,a^{(10)}\}$, which ensures
that every trial has exactly one correct action and that no action is correct
for two different trials. Both properties are verified over the full window
$\{a^{(0)},\dots,a^{(10)}\}$ in Sec.~\ref{app:distinguishability}, and both hold
in all four games with a single exception, Ring 11--20 at depth $10$. This is
what the $K_{\max}$ row of Table~\ref{tab:four-games} records: distinguishability
runs through depth $10$ in 11--20, the All-Pay Auction and Nash Demand, and
through depth $9$ in Ring 11--20.

The battery is irregular in exactly two places, and both are at depth $10$: in
Nash Demand the step to $a^{(10)}$ is produced by a different branch of the
best-response map, and in Ring 11--20 $a^{(10)}$ returns to the anchor, which is
the one violation just noted. We treat both cases in
Sec.~\ref{app:distinguishability}.

\apppart{The level-0 anchor.}
The anchor is the action a player would choose if it did not reason about the
opponent at all. Each payoff function splits into a term in the player's own
action and one or more bonus terms that depend on how that action relates to the
opponent's. A non-strategic player does not ask whether a bonus condition will
in fact be met, so we strip the bonuses of their dependence on $a$: each is
treated as granted whatever the player does. What remains is a function of $s$
alone, and the anchor is its maximiser over $S$. This is the base-payoff
maximisation that the $L_0$ descriptions state in words
(Tables~\ref{tab:app-prompts-a} and~\ref{tab:app-prompts-b}).

The construction applies uniformly to all four games and reproduces every anchor
we use: it leaves $s + 32$ in 11--20, $1.8\,s$ in Nash Demand and $s + 13$ in
Ring 11--20, each maximised at the top of the strategy space, and $18 - s$ in
the All-Pay Auction, maximised at a bid of $0$.

The All-Pay anchor also follows from the maximin criterion
$\arg\max_s \min_{a} u(s;a)$, because a player who bids $s$ and loses still pays
$s$, so the worst case is least bad at a bid of $0$. The two criteria select the
same action. We report the maximin form, since it is the one expressed by the
$L_0$ description shown to the model and the one against which the judge's
\texttt{anchor\_concept} field is scored.

\begin{table}[t]
    \centering\small
    \setlength{\tabcolsep}{5pt}
    \renewcommand{\arraystretch}{1.1}
    \caption{The best-response map of each game in closed form, checked against
    every action in $S$. Branches marked \emph{generating} are those the IDR
    sequence uses; the remaining branches return the sequence to a value it has
    already visited and therefore set $K_{\max}$.}
    \label{tab:app-games}
    \begin{tabular}{@{}l l l l@{}}
        \toprule
        Game & $\br(a)$ & Range of $a$ & Role \\
        \midrule
        \multirow{2}{*}{11--20 (Mod.)}
          & $a-2$   & $a \ge 52$          & generating \\
          & $70$    & $a \le 51$          & closes the cycle \\
        \addlinespace[3pt]
        \multirow{2}{*}{All-Pay Auction}
          & $a+1$   & $a \le 9$           & generating \\
          & $0$     & $a = 10$            & closes the cycle \\
        \addlinespace[3pt]
        \multirow{2}{*}{Nash Demand}
          & $a-1$   & $a \ge 12$          & generating (undercut) \\
          & $20-a$  & $a \le 11$          & concede; $L_{10}$ onward \\
        \addlinespace[3pt]
        \multirow{3}{*}{Ring 11--20}
          & $a-3$   & $a \ge 14$          & generating ($-3$ step) \\
          & $a+7$   & $10 \le a \le 13$   & generating ($+7$ wrap) \\
          & $20$    & $a \le 9$           & never reached \\
        \bottomrule
    \end{tabular}
\end{table}

\subsection{Best-Response Maps and IDR Sequences}
\label{app:ladder-derivations}

Table~\ref{tab:app-games} gives the best-response map of each game. Iterating
these maps from the anchors gives
\begin{itemize}\itemsep2pt
    \item 11--20: $70, 68, \dots, 50$, then back to $70$;
    \item All-Pay: $0, 1, \dots, 10$, then back to $0$;
    \item Nash Demand: $20, 19, \dots, 11$, then $9$;
    \item Ring 11--20: $20, 17, 14, 11, 18, 15, 12, 19, 16, 13$, then back
          to $20$.
\end{itemize}
Two of the four change branch within the range of depths we test, and we derive
both changes here.

\apppart{Nash Demand: the change of step at $L_{10}$.}
Consider a player who expects the opponent to demand $a$. Two replies have to be
compared. To \emph{concede} is to demand the largest amount that keeps the two
demands compatible, $s = 20-a$; this is paid in full and is worth $20-a$. To
\emph{undercut} is to demand $s = a-1$; the two demands then sum to more than
20, but the smaller of them is still paid at the reduced rate of $0.8$, so the
reply is worth $0.8(a-1)$. Undercutting is the better reply exactly when
\begin{equation}
0.8(a-1) > 20-a \iff a > \tfrac{20.8}{1.8} \approx 11.56.
\label{eq:nash-threshold}
\end{equation}
The sequence begins at 20 and falls by one at each step, so it undercuts down to
11. At $a = 11$ the inequality in Eq.~\eqref{eq:nash-threshold} fails and the
player concedes, giving $\br(11) = 9$. The step is $-2$, but it comes from a
different branch of the map rather than from a larger version of the same step.

Conceding is self-cancelling. A demand of $a$ is met by conceding $20-a$, and
conceding to \emph{that} demand gives $20-(20-a) = a$ again, so two conceding
steps return to where they started. Once the sequence reaches 11 it therefore
concedes to 9, concedes back to 11, and alternates between those two values
forever.

\apppart{Ring 11--20: the sequence as a rotation.}
Every value of the Ring IDR sequence lies between 11 and 20. On that range the
two generating branches of Table~\ref{tab:app-games} can be written as a single
formula,
\begin{equation}
\br(a) \;=\; \big((a - 14) \bmod 10\big) + 11,
\qquad a \in \{11,\dots,20\},
\label{eq:ring-mod}
\end{equation}
so the best response moves the action three places backwards around a cycle of
ten. Because 3 and 10 are coprime, repeated backward steps of three visit all
ten values before returning to the starting value, which is why
$a^{(10)} = a^{(0)}$. The upward jumps of $+7$ occur at every third step: these
are the steps at which subtracting 3 would leave the range $\{11,\dots,20\}$, so
the best response wraps around instead.

Three actions have to be compared at a wrapping step. At $a = a^{(3)} = 11$ the
primary-bonus action $s=8$ pays $8+10 = 18$, the secondary-bonus action $s=18$
pays $18+3 = 21$, and the largest action that carries no bonus, $s=20$, pays 20.
The base payoff of $s=20$ already exceeds the payoff of the primary-bonus
action, and only the secondary-bonus action exceeds that base payoff, so the
sequence jumps up to 18.

At $a = 10$, one step below the range covered by Eq.~\eqref{eq:ring-mod}, the
secondary-bonus action $s = 17$ and the top action $s = 20$ both pay 20, and
Eq.~\eqref{eq:tiebreak} selects $\br(10) = 17$. This belief is never reached
from $a^{(0)} = 20$ and therefore affects no IDR sequence. The main text states
the $+7$ branch over $11 \le a \le 13$, the values the sequence actually visits;
Table~\ref{tab:app-games} extends it down to $a = 10$ only so that the closed
form agrees with the tie-break at that unreached belief.

\subsection{Distinguishability Bounds}
\label{app:distinguishability}

\begin{lemma}
\label{lem:orbit}
Let $R = \{a^{(k)} : k \ge 0\}$ be the set of values the IDR sequence ever
reaches from $a^{(0)}$. Then $K_{\max} \le |R| - 1$.
\end{lemma}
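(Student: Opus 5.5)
This is a pigeonhole argument. By Definition~\ref{def:distinguishability}, saying that $G$ satisfies distinguishability within depth $K_{\max}$ means the map $k \mapsto a^{(k)}$ is injective on $\{0,\dots,K_{\max}\}$. We therefore fix any $K$ for which this map is injective on $\{0,\dots,K\}$, and show $K \le |R|-1$. Since $K_{\max}$ is such a $K$ (by definition the largest one), the lemma follows.

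The main step is to compare cardinalities.

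\begin{enumerate}
\item Every term $a^{(k)}$ with $0 \le k \le K$ is a value the IDR sequence reaches from $a^{(0)}$, so the image $\{a^{(0)},\dots,a^{(K)}\}$ is a subset of $R$.
\item Injectivity on $\{0,\dots,K\}$ says this image has exactly $K+1$ elements.
\item Hence $K+1 \le |R|$, that is, $K \le |R|-1$.
\end{enumerate}

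Two points need care, though neither is a real obstacle.

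\emph{Finiteness of $R$.} If $R$ were infinite the bound would be vacuous, so nothing needs proving in that case. In every game of the paper, however, $S$ is a finite integer set, and $R \subseteq S$ because $a^{(0)} \in S$ and $\br$ maps into $S$ via Eq.~\eqref{eq:tiebreak}. So $R$ is finite, as the paper's remark on bounded strategy spaces already notes.

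\emph{Well-definedness of the sequence.} The selection $\br$ of Eq.~\eqref{eq:tiebreak} is deterministic, so the sequence $a^{(k)}$, and hence $R$, is well defined.

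A sharper description of $R$ is also available. Determinism means that once a value repeats, $a^{(i)} = a^{(j)}$ with $i<j$, the sequence is eventually periodic, and $R = \{a^{(0)},\dots,a^{(j-1)}\}$ for the first such $j$. This shows the bound is attained: $K_{\max} = j-1 = |R|-1$.

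We can confirm this against Sec.~\ref{app:ladder-derivations}:
\begin{itemize}
\item 11--20 has $|R| = 11$ and $K_{\max}=10$;
\item All-Pay has $|R| = 11$ and $K_{\max}=10$;
\item Ring 11--20 has $|R| = 10$ and $K_{\max}=9$;
\item Nash Demand has $R=\{20,\dots,11,9\}$, so $|R| = 11$ and $K_{\max}=10$.
\end{itemize}

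The lemma itself needs only the inequality, so I would present the three-step cardinality comparison and mention the attainment as a remark.
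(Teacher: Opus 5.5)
Your proof is correct and is the same pigeonhole argument the paper uses. You phrase it directly: an injective image of size $K+1$ sits inside $R$. The paper argues the contrapositive: among the $|R|+1$ terms $a^{(0)},\dots,a^{(|R|)}$, two must coincide.

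Your attainment remark is also right. It shows that equality $K_{\max}=|R|-1$ holds automatically for any deterministic iteration, whereas the paper verifies equality game by game.
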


\begin{proof}
The values $a^{(0)}, a^{(1)}, \dots, a^{(|R|)}$ number $|R|+1$ and all lie in
$R$, so by the pigeonhole principle two of them are equal. Some pair of depths
$k_1 < k_2 \le |R|$ therefore shares an action, which
Definition~\ref{def:distinguishability} forbids, so distinguishability can hold
at most through depth $|R|-1$.
\end{proof}

The bound has to be stated in terms of the values the sequence actually reaches
rather than in terms of the whole strategy space, and 11--20 shows why. The
image of its best-response map is $\{50,\dots,68\}\cup\{70\}$, which is 20 of
the 21 actions in $S$; the one unreachable action is 69, which would require an
opponent action of 71. The iteration, however, starts at $a^{(0)} = 70$ and
subtracts 2 at every generating step, so it visits only the eleven even actions
between 50 and 70, giving $|R| = 11$ rather than 20.

Lemma~\ref{lem:orbit} holds with equality in all four games: each sequence
visits every value it can reach exactly once, and the following step returns it
to a value already used. This gives $K_{\max} = 10$ for 11--20, the All-Pay
Auction and Nash Demand, and $K_{\max} = 9$ for Ring 11--20, as reported in
Table~\ref{tab:four-games}. The mechanism differs across games. In 11--20 the
action that would carry the bonus falls outside $S$, so
$\br(51) = \br(50) = 70$. In the All-Pay Auction the winning bid falls outside
$S$, so $\br(10) = 0$. In Nash Demand the switch to the conceding branch makes
the sequence alternate. In Ring 11--20 the rotation of
Eq.~\eqref{eq:ring-mod} has order 10. Three of the four sequences close by
returning to the anchor, Ring 11--20 at $a^{(10)}$ and the other two at
$a^{(11)}$; Nash Demand is the exception, returning to $a^{(9)} = 11$ rather
than cycling through the whole orbit.

\apppart{Behaviour at $k_{\text{target}} = 10$.}
The deepest condition we test is not the same kind of step in all four games. In
11--20 ($a^{(10)} = 50$) and in the All-Pay Auction ($a^{(10)} = 10$) it is one
further step of the branch that generated every earlier step. In Nash Demand the
correct action, 9, is produced by the change of branch derived above, so a model
can iterate correctly and still not reach it. In Ring 11--20 the correct action
is $a^{(10)} = 20 = a^{(0)}$, so a submission of 20 matches two depths:
$\mathcal{K}_B$ then has two elements and $K_B$ is undefined
(Sec.~\ref{app:derivations}). A trial of that kind cannot identify a depth, and
we therefore exclude Ring 11--20 at $k_{\text{target}} = 10$ from the depth
figures.
These trials are excluded from the accuracy in
Table~\ref{tab:results-by-model-game} as well, so the Ring 11--20 denominator
is 720 trials per model across all four conditions rather than 800 (713 for
Qwen 3.6 27B, which lost seven trials to API errors), and 180 rather than 200
in each of the two Ring columns of that table. The exclusion is not neutral:
the ambiguous submission 20 is scored as correct at $k_{\text{target}} = 10$,
and it is submitted in 230 of the 320 affected trials, so retaining them would
raise pooled accuracy from $31.9\%$ ($n = 12{,}467$) to $32.9\%$
($n = 12{,}787$) and raise every per-model Ring figure (Sonnet 4.6
$28.6 \to 31.2\%$, DeepSeek v4 Pro $39.4 \to 44.1\%$, Gemma 4 31B
$36.0 \to 37.2\%$, Qwen 3.6 27B $38.7 \to 44.6\%$).

\subsection{Extrapolability of the IDR Sequences}
\label{app:shortcuts}

A model can arrive at $a^{(k)}$ without performing $k$ iterations. In three of
the four games the IDR sequence is an arithmetic progression over the depths it
generates, so the sequence is determined by its first two values: a model that
computes $a^{(1)}$ and applies the step $-2$, $+1$ or $-1$ can produce any
$a^{(k)}$ up to $k = 9$ in a single operation, and in 11--20 and the All-Pay
Auction $a^{(10)}$ as well. Nash Demand is the one exception at the deepest
depth: its $a^{(10)} = 9$ is produced by the change of branch derived above,
whereas extrapolating the $-1$ step predicts 10. In these three games the distance of an action from the anchor and
the depth of that action are collinear by construction, so neither accuracy nor
the two depth signals separates iteration from extrapolation.

Ring 11--20 separates them. Its sequence also has a closed form,
Eq.~\eqref{eq:ring-mod}, but that form is modular rather than affine and cannot
be recovered from the first two values. A model that extrapolates the constant
step visible at $L_1$--$L_3$ produces $20, 17, 14, 11, 8, 5, 2, \dots$ instead of
$20, 17, 14, 11, 18, 15, 12, \dots$, and the two sequences diverge at the first
wrap, $k=4$. The extrapolated values stay inside $S$ through $k=6$ and lie off
the IDR sequence at every depth from $k=4$ onward, so an extrapolating
submission is recorded as matching no depth rather than as correct.
Ring 11--20 therefore carries an extrapolation check that the other three games
cannot, which is one reason accuracy is broken out by game in
Table~\ref{tab:results-by-model-game} rather than pooled across games, and why
the per-step analysis in Figure~\ref{fig:chain-breakdown} scores the written
intermediate values rather than the endpoint alone.

\subsection{Scoring Best-Response Consistency}
\label{app:multi-br}

Best-response consistency asks whether the action a model submitted is a best
response to the belief it stated. In the recursive condition the judge extracts
a point belief $\hat b$, and the action $\hat a$ is checked against the whole
best-response set:
\begin{equation}
\text{consistent} \iff \hat a \in \BR(\hat b),
\label{eq:faithful}
\end{equation}
where $\BR(\hat b)$ is obtained by enumerating $u(s;\hat b)$ over all $s \in S$
and taking the exact maximisers.

In the inductive condition a model frequently states a distribution over
opponent actions rather than a single action. Where the judge extracts a
non-empty \texttt{belief\_support}, consistency is scored against expected
utility under that support: $\hat a$ must maximise
$\sum_j p_j\, u(\hat a; b_j)$ over $S$, where the weights $p_j$ are taken from
the likelihood phrases the model itself used. The point-belief check of
Eq.~\eqref{eq:faithful} is retained alongside this test under a separate field.
Where no support is extracted, Eq.~\eqref{eq:faithful} is used unchanged.

\section{Prompts, Models, and Execution}
\label{app:prompts}

\subsection{System Prompts}
\label{app:system-prompts}

The system prompt is the same for every game, condition, and model.

\begin{quote}\ttfamily\small
You are a rational game-theory player. Think step by step about your strategy.
Show all your reasoning, then call the submit\_action tool with your final
chosen number. You MUST use the tool to submit --- do not just state your
answer in text.
\end{quote}

\noindent In the inductive condition one further paragraph is appended:

\begin{quote}\ttfamily\small
You also have access to a query\_opponent\_history tool that returns your
opponent's raw action choices from their previous games against other players.
You may call it to inform your decision.
\end{quote}

\subsection{Game Rules and Opponent Descriptions}
\label{app:game-rules}
\label{app:format-a}

The user turn of every trial begins with the rules of the game and, in the
recursive condition, continues with a description of the opponent.
Tables~\ref{tab:app-prompts-a} and~\ref{tab:app-prompts-b} reproduce both
verbatim. The rules text is identical across conditions and depths, so the only
material that changes between trials is what follows it.

The opponent's depth is conveyed in two ways. For $L_0$--$L_3$
($k_{\text{target}} = 1$ to $4$) it is conveyed in natural language: the
description states how sophisticated the opponent is, and the model has to count
the levels from the way the sentences are nested. For $L_4$--$L_9$
($k_{\text{target}} = 5$ to $10$) the level is named directly, using a template:

\begin{quote}\ttfamily\small
\{Your opponent is $\vert$ Your opponents are\} Level \{$k$\}. A Level k player
best-responds to a Level (k-1) player, all the way down to Level 0, which is
defined as follows: \{$L_0$ description\}
\end{quote}

The level-0 sentence substituted into the template is the same string that
appears in the $L_0$ row for that game. The two formats therefore give the model
the same starting point and differ only in how the depth above that starting
point is conveyed. The wording changes between $k_{\text{target}} = 4$ and
$k_{\text{target}} = 5$, and the target depth increases across the same
boundary, so the design as run cannot separate the effect of the wording from
the effect of the depth; comparisons that cross this boundary confound the two.

The inductive condition supplies no opponent description. The user turn contains
the rules followed by:

\begin{quote}\ttfamily\small
You may use the query\_opponent\_history tool to view your opponent's past
action choices in previous games against other players. This data may help you
anticipate their strategy.
\end{quote}

No part of this text mentions levels, iteration, best responses, or the opponent
reasoning about the model. This is the basis for the explicit-mentalizing
measure reported in the main text and in Table~\ref{tab:spurious-recursion}.

\setcounter{topnumber}{3}
\setcounter{dbltopnumber}{3}
\setcounter{totalnumber}{5}
\renewcommand{\topfraction}{0.9}
\renewcommand{\dbltopfraction}{0.9}
\renewcommand{\floatpagefraction}{0.85}
\renewcommand{\dblfloatpagefraction}{0.85}
\renewcommand{\textfraction}{0.05}
\setlength{\dblfloatsep}{6pt plus 2pt minus 2pt}
\setlength{\dbltextfloatsep}{10pt plus 2pt minus 2pt}

\begin{table*}[p]
    \centering\small
    \setlength{\tabcolsep}{3pt}
    \renewcommand{\arraystretch}{0.95}
    \caption{Rules text and $L_0$--$L_3$ opponent descriptions for 11--20
    (Mod.) and the All-Pay Auction, reproduced verbatim. The $L_0$ description
    is also substituted into the template used for $L_4$--$L_9$.}
    \label{tab:app-prompts-a}
    \begin{tabular}{@{}l l p{0.80\textwidth}@{}}
        \toprule
        Game & & Text\\
        \midrule
        \multirow{5}{*}{11--20 (Mod.)}
        & Rules & \ttfamily\scriptsize I am playing a two-player game. My opponent and I each simultaneously choose an integer from 50 to 70. My payoff equals my chosen number, plus a bonus of 24 if my number is exactly 2 less than my opponent's, or a bonus of 8 if my number is exactly 1 less than my opponent's. I want to maximize my payoff. What integer should I choose?\\
        & $L_0$ & \ttfamily\scriptsize Your opponent plays naively. They have no strategic awareness and simply try to maximize their own base payoff without considering how your choice might interact with theirs.\\
        & $L_1$ & \ttfamily\scriptsize Your opponent is a basic strategic thinker. They assume you will naively maximize your own base payoff, and they choose the action that best responds to a base-payoff maximizer.\\
        & $L_2$ & \ttfamily\scriptsize Your opponent is a moderately sophisticated thinker. They assume you are a basic strategist who best-responds to naive base-payoff maximizers, and they choose the action that best responds to that level of reasoning.\\
        & $L_3$ & \ttfamily\scriptsize Your opponent is a deeply strategic thinker. They assume you are moderately sophisticated --- someone who best-responds to basic strategists --- and they choose the action that best responds to that deeper level of reasoning.\\
        \midrule
        \multirow{5}{*}{All-Pay Auction}
        & Rules & \ttfamily\scriptsize I am playing a two-player all-pay auction. My opponent and I each simultaneously choose a bid from 0 to 10 (integers only). Both of us pay our bid regardless of the outcome. If my bid is strictly higher, I win a prize of 12. If we tie, I receive 6. If my bid is lower, I receive nothing. My payoff is: prize (if any) minus my bid. I want to maximize my payoff. What integer should I bid?\\
        & $L_0$ & \ttfamily\scriptsize Your opponent plays naively. They are very cautious and would rather not bid at all than risk paying a bid and losing. They do not think strategically about what you might bid or how to compete for the prize.\\
        & $L_1$ & \ttfamily\scriptsize Your opponent is a basic strategic thinker. They assume you are very cautious and will not bid at all to avoid any risk of losing money, and they choose the bid that best responds to a non-bidder.\\
        & $L_2$ & \ttfamily\scriptsize Your opponent is a moderately sophisticated thinker. They assume you are a basic strategist who places the minimum bid needed to beat a cautious non-bidder, and they choose the bid that best responds to that level of reasoning.\\
        & $L_3$ & \ttfamily\scriptsize Your opponent is a deeply strategic thinker. They assume you are moderately sophisticated --- someone who just barely outbids basic strategists --- and they choose the bid that best responds to that deeper level of reasoning.\\
        \bottomrule
    \end{tabular}

    \bigskip
    \caption{Rules text and $L_0$--$L_3$ opponent descriptions for Nash Demand
    and Ring 11--20, reproduced verbatim.}
    \label{tab:app-prompts-b}
    \begin{tabular}{@{}l l p{0.80\textwidth}@{}}
        \toprule
        Game & & Text\\
        \midrule
        \multirow{5}{*}{Nash Demand}
        & Rules & \ttfamily\scriptsize I am playing a two-player Nash demand game. A prize of 20 is at stake. My opponent and I each simultaneously demand an integer from 1 to 20. If our demands sum to 20 or less, I receive my demand. If our demands exceed 20 but my demand is strictly less than my opponent's, I receive 80\% of my demand. Otherwise I receive nothing. I want to maximize my payoff. What integer should I demand?\\
        & $L_0$ & \ttfamily\scriptsize Your opponent plays naively. They are greedy and try to claim as much of the prize as possible for themselves, without considering how your demand might interact with theirs.\\
        & $L_1$ & \ttfamily\scriptsize Your opponent is a basic strategic thinker. They assume you are greedy and will demand as much as possible, and they choose the demand that best responds to a greedy maximizer.\\
        & $L_2$ & \ttfamily\scriptsize Your opponent is a moderately sophisticated thinker. They assume you are a basic strategist who slightly undercuts a greedy maximizer, and they choose the demand that best responds to that level of reasoning.\\
        & $L_3$ & \ttfamily\scriptsize Your opponent is a deeply strategic thinker. They assume you are moderately sophisticated --- someone who undercuts basic strategists --- and they choose the demand that best responds to that deeper level of reasoning.\\
        \midrule
        \multirow{5}{*}{Ring 11--20}
        & Rules & \ttfamily\scriptsize I am one of 5 players arranged in a ring (so player 1 neighbors players 2 and 5, etc.). Each of us simultaneously chooses an integer from 1 to 20. My payoff equals my chosen number, plus a bonus of 10 if my number is exactly 3 less than my left neighbor's number, plus a bonus of 3 if my number is exactly 7 more than my right neighbor's number. I want to maximize my payoff. What integer should I choose?\\
        & $L_0$ & \ttfamily\scriptsize Your opponents play naively. They have no strategic awareness and simply try to maximize their own base payoff without considering how other players' choices might interact with theirs.\\
        & $L_1$ & \ttfamily\scriptsize Your opponents are basic strategic thinkers. They assume other players naively maximize their own base payoff, and they choose the action that best responds to base-payoff maximizers.\\
        & $L_2$ & \ttfamily\scriptsize Your opponents are moderately sophisticated thinkers. They assume other players are basic strategists who best-respond to naive base-payoff maximizers, and they choose the action that best responds to that level of reasoning.\\
        & $L_3$ & \ttfamily\scriptsize Your opponents are deeply strategic thinkers. They assume other players are moderately sophisticated --- those who best-respond to basic strategists --- and they choose the action that best responds to that deeper level of reasoning.\\
        \bottomrule
    \end{tabular}
\end{table*}

\subsection{Models, Decoding, and Protocol}
\label{app:tools}

Table~\ref{tab:app-execution} lists the settings under which trials were run.
Run health by model, and the breakdown by target depth, are in the released
logs.

\begin{table*}[!t]
    \centering\small
    \setlength{\tabcolsep}{5pt}
    \renewcommand{\arraystretch}{1.05}
    \caption{Models, decoding, and protocol.}
    \label{tab:app-execution}
    \begin{tabular}{@{}l p{0.74\textwidth}@{}}
        \toprule
        Item & Specification\\
        \midrule
        Models & Called through OpenRouter by slug:
          \texttt{google/gemma-4-31b-it}, \texttt{qwen/qwen3.6-27b},
          \texttt{anthropic/claude-sonnet-4-6},
          \texttt{deepseek/deepseek-v4-pro}\\
        Reasoning mode & Default effort requested for \texttt{gemma-4} and
          \texttt{qwen3.6}; no reasoning parameter sent for the other two, which
          ran at the provider default. The resolved configuration is recorded
          per trial\\
        Decoding & Temperature 0.2 and \texttt{max\_tokens} 8192 sent on every
          request, and reported as sent: a provider may ignore or clamp them\\
        Routing & Provider fallback enabled, so consecutive requests may be
          served by different upstream providers; the serving provider is
          recorded per turn. Pin a single provider to reproduce a run exactly\\
        \midrule
        Tools & \texttt{submit\_action} takes the final chosen number and is
          called once after reasoning. \texttt{query\_opponent\_history} takes
          no arguments and returns 30 integers; its one-call limit is stated in
          the description but not enforced, and the call count is recorded\\
        Turn cap & 10 turns per trial\\
        Nudge & A turn ending with no tool call and no action yet submitted
          receives: \emph{``You have not submitted an action yet. Call the
          submit\_action tool now with your final chosen number. Do not write
          any more reasoning.''}\\
        Chain of thought & Reasoning-channel and visible text from every turn
          before submission, concatenated in order; text written after
          submission is excluded\\
        \midrule
        Design & $4$ games $\times$ $10$ opponent levels $\times$
          $\{$recursive, inductive $\lambda\!=\!1,3,10\}$ $\times$
          $20$ repetitions $\times$ 4 models, or 3200 trials per model
          (3201 for Gemma 4 31B)\\
        % TODO(authors): the design is 4x10x4x20 = 3200 per model.  The extra
        % Gemma 4 31B record needs a one-clause explanation here (retry kept,
        % duplicate log row, ...) or the count should be corrected to 3200.
        Failed trials & A trial that raised an API error is recorded with the
          error and excluded; a trial that completed without an action is scored
          as incorrect rather than dropped, so accuracy cannot be raised by
          declining to answer. No model exceeded a 0.7\% no-action rate. Errors: none
          for Sonnet 4.6; 22 no-action trials
          for DeepSeek v4 Pro; 3 generation and 72 judge API errors for Gemma
          4 31B; 11 generation, 24 judge API and 3 no-action for Qwen 3.6 27B.
          No generation-side error left a trial without a result record\\
        \bottomrule
    \end{tabular}
\end{table*}

\subsection{A Worked Example}
\label{app:worked-example}

Figure~\ref{fig:app-worked-example} shows a trial end to end in the format the
harness records: the two prompt turns, the tool call, the chain of thought, the
judge's extraction, and the quantities the code derives from that extraction.
The prompt turns, the extraction schema and the derived quantities are exactly
as the pipeline produces them; the chain of thought is a constructed
illustration rather than a verbatim log, so that one figure can display a fully
consistent chain alongside every derived quantity. Verbatim transcripts,
including trials whose chains are short or contain a step that is not a best
response, are in the released logs. The example is a recursive-condition trial
in the modified 11--20 game against an $L_5$ opponent, so
$k_{\text{target}} = 6$ and the correct action is $a^{(6)} = 58$.

\begin{figure*}[t]
\centering
\begin{minipage}{0.97\textwidth}
\small
\textbf{Illustrative trial}, modified 11--20 game, recursive condition,
opponent level 5, $k_{\text{target}} = 6$, correct action $a^{(6)} = 58$.

\medskip
\textbf{System turn.}
\begin{quote}\ttfamily\scriptsize
You are a rational game-theory player. Think step by step about your strategy.
Show all your reasoning, then call the submit\_action tool with your final
chosen number. You MUST use the tool to submit --- do not just state your
answer in text.
\end{quote}

\textbf{User turn} (rules, then the templated $L_5$ description).
\begin{quote}\ttfamily\scriptsize
I am playing a two-player game. My opponent and I each simultaneously choose an
integer from 50 to 70. My payoff equals my chosen number, plus a bonus of 24 if
my number is exactly 2 less than my opponent's, or a bonus of 8 if my number is
exactly 1 less than my opponent's. I want to maximize my payoff. What integer
should I choose?

Your opponent is Level 5. A Level k player best-responds to a Level (k-1)
player, all the way down to Level 0, which is defined as follows: Your opponent
plays naively. They have no strategic awareness and simply try to maximize
their own base payoff without considering how your choice might interact with
theirs.
\end{quote}

\textbf{Chain of thought in the recorded format} (reasoning channel and visible
text before submission).
\begin{quote}\ttfamily\scriptsize
Level 0 ignores the bonuses entirely and just takes the biggest number, so L0
plays 70. Against 70, the 24 bonus needs 68, worth 92, versus 69 with the 8
bonus at 77, versus 70 alone at 70. So L1 = 68. The same comparison holds at
every step: undercutting by 2 is always worth 24 against a base difference of
2, so each level subtracts 2. L2 = 66, L3 = 64, L4 = 62, L5 = 60. My opponent
is Level 5, so they will choose 60. I should best-respond to 60, which means
58 (payoff 58 + 24 = 82), better than 59 (59 + 8 = 67) or 70 (70).
\end{quote}

\textbf{Tool call.} \texttt{submit\_action(action=58)}

\medskip
\textbf{Judge extraction} (recursive-condition schema, fields shown in
Table~\ref{tab:app-judge-schema}; quotes elided for space).
\begin{quote}\ttfamily\scriptsize
\{"anchor\_present": true, "anchor\_value": 70, "anchor\_concept":
"base\_payoff\_maximizer", "chain": [\{"value": 70, "attributed\_to":
"opponent"\}, \{"value": 68, "attributed\_to": "opponent"\}, \{"value": 66,
...\}, \{"value": 64, ...\}, \{"value": 62, ...\}, \{"value": 60,
"attributed\_to": "opponent"\}], "iteration\_verbalized": "YES",
"final\_belief\_point": 60, "final\_belief\_ambiguous": false,
"belief\_is\_distribution": false, "claimed\_br": 58,
"opponent\_models\_me": false\}
\end{quote}

\textbf{Derived quantities.}
$\hat a = 58$, and $\mathcal{K}_B = \{6\}$, so $K_B = 6$.
$\hat b = 60 = a^{(5)}$, and $\mathcal{K}_L = \{5\}$, so $K_L = 6$.
Accuracy: $\hat a = a^{(6)}$, correct.
Best-response consistency: $\BR(60) = \{58\}$ and $\hat a \in \BR(60)$,
consistent.
Chain: $c = (70, 68, 66, 64, 62, 60)$, so $m = |c| = 6 = k_{\text{target}}$,
and all five consecutive pairs are exact best responses, so $C = 1$.
\end{minipage}
\caption{A trial end to end, from the two prompt turns through to the derived
quantities. The prompt turns, extraction schema and derived quantities are as
the pipeline produces them; the chain of thought is a constructed illustration,
not a verbatim log. The released logs contain the verbatim text, the judge's
quotes, and the per-turn provider record for every trial.}
\label{fig:app-worked-example}
\end{figure*}

\section{Inductive Condition: Generating the Opponent History}
\label{app:qch}

\subsection{Sampling}
\label{app:qch-sampling}

For an opponent at level $k \ge 1$, the 30 records shown to the model are drawn
independently from a quantal best response to $a^{(k-1)}$,
\begin{equation}
P_\lambda\!\left(h = s \mid k\right) =
\frac{\exp\!\big(\lambda\,\tilde u(s; a^{(k-1)})\big)}
     {\sum_{s'\in S}\exp\!\big(\lambda\,\tilde u(s'; a^{(k-1)})\big)},
\label{eq:qch}
\end{equation}
\begin{equation*}
\tilde u(s;a) = \frac{u(s;a) - u_{\min}(a)}{u_{\max}(a) - u_{\min}(a)},
\end{equation*}
where $u_{\max}(a)$ and $u_{\min}(a)$ are the largest and smallest payoffs
available at the fixed belief $a = a^{(k-1)}$. Any constant shift of $\tilde u$
leaves Eq.~\eqref{eq:qch} unchanged, so the choice of offset is immaterial.
Rescaling in this way makes a single value of $\lambda$ comparable across games
whose payoffs differ in magnitude, and it leaves the ordering of the payoffs
unchanged, so $a^{(k)}$ is the modal draw at every $\lambda$. What $\lambda$
controls is how tightly the 30 records concentrate on that value, and therefore
how clearly the opponent's depth can be read off the history.

Eq.~\eqref{eq:qch} is the quantal-response rule of the quantal cognitive
hierarchy \citep{wright2010beyond} applied to the single action $a^{(k-1)}$,
whereas a full cognitive-hierarchy belief would average over all lower types
\citep{camerer2004cognitive}. We use the degenerate level-$k$ belief because the
recursive condition describes an opponent with exactly that belief and the two
conditions should present the same opponent; where the distinction matters below
we refer to the generator as a quantal level-$k$ generator. The sampled
histories are released with the logs.

\apppart{Why level 0 is not the same opponent in the two conditions.}
A level-0 opponent has no beliefs, so there is nothing for
Eq.~\eqref{eq:qch} to best-respond to. In the inductive condition we therefore
generate its history by drawing 30 actions uniformly at random from $S$,
following the usual quantal-response convention. In the recursive condition,
level 0 is instead described in words as a specific non-strategic action, and
the model can read the anchor $a^{(0)}$ off that description. The same nominal
level thus means \emph{plays at random} in one condition and \emph{plays
$a^{(0)}$} in the other.

Two things follow. First, $\lambda$ has no effect at $k_{\text{target}} = 1$,
because uniform draws are uniform at any value of $\lambda$; the three settings
are exact replicates at that depth, and any statement about how the two depth
signals move with $\lambda$ therefore applies to $k_{\text{target}} \ge 2$.
Second, the task itself differs between the conditions at that depth.
Recursively, the model best-responds to an action it has been told about;
inductively, it has to recognise that the opponent has no pattern at all and
decide what to do about that. Accuracy at $k_{\text{target}} = 1$ is therefore
not comparable across conditions, and cross-condition comparisons at that depth
should be read with this in mind.

\subsection{The $\lambda$ Sweep}
\label{app:lambda-sweep}

The sweep is released with the logs as \texttt{lambda\_sweep.csv}: one row per
(model, game, target depth, $\lambda$) cell, 480 rows in all, each giving
accuracy and the two depth signals $K_L$ and $K_B$. This is the breakdown that
Table~\ref{tab:results-by-model-game} refers to.

\section{Judge Protocol and Derived Metrics}
\label{app:judge}

\subsection{Protocol}
\label{app:judge-protocol}

The judge receives the rules of the game and the raw chain of thought, and
nothing else: not the opponent description, the opponent's level, the target
depth, the sampled history, the tool calls, or the submitted action. Its system
prompt restricts it to reporting what the text says, and it must supply a
verbatim quote for every value it reports.

\begin{quote}\ttfamily\small
You are an annotation judge. You are given the rules of a strategic game and a
Chain-of-Thought (CoT) transcript produced by a language model playing that
game. Your task is EXTRACTION ONLY.

\medskip
Rules:

- Extract only what is explicitly stated in the CoT text. Do not infer, guess,
fill gaps, verify correctness, or assign Level-K labels.

- Every extracted value MUST include a verbatim ``quote'' copied from the CoT.

- If information is absent, use null (and empty lists for list fields).

- Report numbers exactly as stated; preserve order of appearance.

- Output ONLY a single valid JSON object matching the requested schema. No
markdown fences, no commentary.
\end{quote}

\apppart{Quote grounding.}
Every extracted value carries a quote that the schema requires to be copied from
the transcript, which gives a mechanical check on the extraction: we test each
quote for exact substring membership in the chain of thought it was drawn from,
after whitespace normalisation. The outcome of that check is recorded for every
extracted value and released with the logs.

The judge belongs to the same model family as one of the four evaluated models.
The extraction-only design limits the room for a self-preference effect, since
the judge is asked to copy values rather than to assess quality and never sees
the submitted action or whether the trial was correct, but it does not eliminate
that possibility; the per-model quote-grounding records in the released logs are
the evidence we can offer on this point.

\subsection{Extraction Schema}
\label{app:judge-schema}

The two conditions use separate schemas, because the material available in the
chain of thought differs: a recursive-condition transcript can state an anchor
and a chain of iterated values, whereas an inductive-condition transcript can
instead make claims about the sampled history.
Table~\ref{tab:app-judge-schema} lists the fields that feed the results we
report, with boolean-and-quote pairs collapsed into a single entry. The full
schemas are in the released code.

\begin{table}[!t]
    \centering\footnotesize
    \setlength{\tabcolsep}{4pt}
    \renewcommand{\arraystretch}{1.05}
    \caption{The judge fields that feed the reported results.}
    \label{tab:app-judge-schema}
    \begin{tabular}{@{}l l p{0.58\columnwidth}@{}}
        \toprule
        Group & Cond. & Fields\\
        \midrule
        Anchor & rec.
          & \texttt{anchor\_present}, \texttt{anchor\_value},
            \texttt{anchor\_concept}\\
        Chain & rec.
          & \texttt{chain[]}, entries
            $(\texttt{value},\texttt{attributed\_to},\texttt{quote})$;
            \texttt{iteration\_verbalized}\\
        History & ind.
          & \texttt{history\_referenced}, \texttt{cited\_statistics[]},
            \texttt{noise\_acknowledged}, \texttt{claimed\_n}\\
        Belief & both
          & \texttt{final\_belief\_point} with
            \texttt{final\_belief\_ambiguous} and
            \texttt{final\_belief\_candidates};
            \texttt{belief\_is\_distribution}, \texttt{belief\_support[]}\\
        Decision & both
          & \texttt{claimed\_br}, \texttt{payoff\_computations[]}\\
        Mentalizing & rec.
          & \texttt{opponent\_models\_me}\\
        Mentalizing & ind.
          & \texttt{spurious\_recursion}, the mentalizing flag behind
            Table~\ref{tab:spurious-recursion}\\
        \bottomrule
        \multicolumn{3}{@{}p{0.97\columnwidth}@{}}{\scriptsize
        \texttt{attributed\_to} $\in$ \{opponent, opponent\_model\_of\_self,
        generic\_player, unclear\};
        \texttt{iteration\_verbalized} $\in$ \{YES, NO, STOP\_RULE\_STATED\}.}
    \end{tabular}
\end{table}

Every field asks the judge to copy a value out of the text, with the exception
of the two mentalizing fields. Those two ask for a decision: whether the chain
of thought describes the opponent as reasoning about the model's own current
choice. Both record the same annotation, which we call \emph{explicit strategic
mentalizing} throughout, and both are extracted from the chain of thought alone,
without sight of the target level or the submitted action. The released code
names them differently by condition, \texttt{opponent\_models\_me} in the
recursive condition and \texttt{spurious\_recursion} in the inductive one,
because the prompt gives such a statement a basis in one and not in the other:
recursively, the prompt states the opponent's level, and a level-$k$ opponent is
by construction one that best-responds to a belief about the other player, so
the statement is supported; inductively, no part of the prompt describes the
opponent as reasoning about anyone.

\subsection{What the Code Computes from the Extraction}
\label{app:derivations}

Let $\hat a$ be the submitted action and $\hat b$ the final belief the judge
extracted. Matching is performed over the full window
$\mathcal{W} = \{0,1,\dots,10\}$ of depths for which the IDR sequence is
defined:
\begin{align}
\mathcal{K}_B &= \{k \in \mathcal{W} : a^{(k)} = \hat a\}, \\
\mathcal{K}_L &= \{k \in \mathcal{W} : a^{(k)} = \hat b\}.
\end{align}
Each signal is defined only when the corresponding set contains exactly one
element; $K_B$ is then that element and $K_L$ is that element plus one.

Because $\mathcal{W}$ contains depth 10, $K_L$ can be as large as 11, one step
beyond the deepest condition we run. Such a trial is one in which the model
states a belief appropriate to an $L_{10}$ opponent, although the deepest
opponent we describe is $L_9$. We retain these trials as resolved values of
$K_L$ rather than discarding them, because discarding them would remove
over-iteration selectively. In Nash Demand $a^{(11)} = a^{(9)}$, so a stated
belief of 11 in that game matches depth 9 alone and yields $K_L = 10$ rather
than 12.

A signal can fail to resolve in three ways, recorded separately: the value is
present but matches no depth (\emph{unmatched}); it matches more than one depth
(\emph{ambiguous}); or the chain of thought states no such value at all
(\emph{none}). The ambiguous case arises in Ring 11--20 whenever a model submits
20, which matches both $a^{(0)}$ and $a^{(10)}$. We compute the relationship
between $K_L$ and $K_B$ only over trials in which both signals resolve. Trials
with confused reasoning are both more likely to be wrong and harder to read, so
they are excluded more often than clear trials, and the agreement between the
two signals should therefore be read as an upper estimate.

The per-step measure plotted in Figure~\ref{fig:chain-breakdown} is the share of
consecutive pairs in the model's written chain $c$ for which the second value is
a best response to the first:
\begin{equation}
C = \frac{1}{|c| - 1}\sum_{t=1}^{|c|-1} \indic{c_{t+1} \in \BR(c_t)},
\qquad |c| \ge 2 .
\label{eq:chain-completeness}
\end{equation}
The membership test uses the best-response set, so a step that breaks a tie
differently from Eq.~\eqref{eq:tiebreak} is not penalised. A chain of one value
contains no consecutive pairs, so $C$ is undefined for such a chain. This
applies at $k_{\text{target}} = 1$, where a fully correct chain consists of a
single value: the $k=1$ point on that curve is computed only over the trials in
which the model wrote a second value that the task did not require, and it is
therefore not comparable with the points at greater depths. The second measure
is the chain length $m = |c|$, compared against the length the trial requires,
$m = k_{\text{target}}$.

\subsection{Uncertainty}
\label{app:ci}

For proportions we report Wilson score intervals,
\begin{equation}
\frac{\hat p + \frac{z^2}{2n} \pm z\sqrt{\frac{\hat p(1-\hat p)}{n} + \frac{z^2}{4n^2}}}{1 + \frac{z^2}{n}},
\quad z = 1.96 ,
\label{eq:wilson}
\end{equation}
rather than Wald intervals, because several of the reported quantities lie close
to 0 or 1, where the Wald interval behaves badly. For means of non-binary
quantities, such as the mean chain length in
Figure~\ref{fig:chain-breakdown}, we use normal-approximation intervals.

Repeated trials within one condition share a prompt, a sampled history and a
game, so intervals computed as if the trials were independent are too narrow. We
therefore also report a cluster bootstrap, resampling whole
(game, condition, depth, $\lambda$) cells with replacement over $B = 2000$
replicates. The resulting intervals are substantially wider than
Eq.~\eqref{eq:wilson}: a median of $4.1\times$ across the four per-model
estimates, and $7.6\times$ for the pooled estimate, where clustering bites
hardest because the pooled sample spans the most cells. Where the two disagree,
the figures show the wider interval.

\end{document}